\documentclass[aps,pra,preprint,amsmath,amssymb,superscriptaddress,nofootinbib]{revtex4-2}

\usepackage[utf8]{inputenc}
\usepackage{bm}
\usepackage{mathtools}
\usepackage{amsthm}
\usepackage{booktabs}
\usepackage{graphicx}
\usepackage{tikz}
\usetikzlibrary{arrows.meta,patterns}
\usepackage{hyperref}

\newcommand{\Tr}{\operatorname{Tr}}
\newcommand{\Id}{\mathbb{1}}
\newcommand{\Om}{\Omega}
\newcommand{\grad}{\operatorname{grad}}
\newcommand{\M}{\mathcal M}
\DeclareMathOperator{\arccosh}{arccosh}

\newtheorem{theorem}{Theorem}
\newtheorem{proposition}[theorem]{Proposition}

\newtheorem{corollary}[theorem]{Corollary}
\theoremstyle{definition}
\newtheorem{definition}[theorem]{Definition}
\theoremstyle{remark}
\newtheorem{remark}[theorem]{Remark}

\begin{document}

\title{The Global Geometry of the Gaussian Bures Manifold:\\
Admissible Domain, Spectral Boundary, and Asymptotic Classicality}

\author{Christian Kerskens}
\affiliation{Trinity College Institute of Neuroscience, Trinity College Dublin, Dublin, Ireland}
\date{\today}

\begin{abstract}
We construct the covariance-sector Bures geometry of centered bosonic Gaussian
states from the Gaussian symmetric-logarithmic-derivative equation
and determine its global quantum domain.  The Robertson--Schr\"odinger
condition bounds this domain below by the Williamson uncertainty floor.  Its
boundary geometry is anisotropic: spectrum-changing radial coefficients
diverge, whereas squeezing and rotation coefficients remain finite along the
pure Gaussian orbit.  Dually, the covariance cometric develops an exact
\(m^2\)-dimensional kernel when \(m\) Williamson modes are pure, although the
floor remains at finite radial Bures distance.  We derive the radial potential
\(\Phi=-\tfrac12\sum_k\log(\nu_k^2-\tfrac14)\), which generates covariance
dilation on fixed-Williamson-frame radial submanifolds and equals \(\beta F\)
on fixed-Hamiltonian thermal families.  At large symplectic eigenvalue the
relative symplectic correction is suppressed as \(O(\hbar^2/\nu^2)\), yielding
an intrinsic high-noise recovery of covariance Fisher--Rao geometry.  This
limit is distinct from the Williamson floor, a quantum pure-state boundary
rather than a classical limit: classical statistics there require an
externally supplied measurement channel, which the geometry constrains but
does not select.  As a secondary comparison with the Fisher--Rao and
Bures--Wasserstein geometries, and after introducing a constant
action-matching parameter \(\eta\), the three normalized determinant densities
possess an exact one-mode junction at
\((x,\eta)=(\sqrt{\varphi},\sqrt{\varphi})\).  The equality is kinematic:
neither an additional boundary nor a dynamical transition, and branch selection
requires an explicitly stated minimum-log-determinant hypothesis.  Finally,
along any bounded-rate compression toward the floor the Bures kinetic action
diverges while the Bures--Wasserstein cost stays finite; under a bounded Bures
budget the consequence is radial deceleration, not a forced change of geometry.
\end{abstract}

\maketitle

\section{Introduction}
\label{sec:intro}

The manifolds of classical and quantum Gaussian states provide a setting in
which statistical, transport, and symplectic structures can be compared in
closed form.  Centered classical Gaussian measures are parametrized by positive
covariance matrices and carry both the covariance-restricted Fisher--Rao metric
and the Bures--Wasserstein transport metric
\cite{Villani2009,Amari2016,Takatsu2011,Bhatia2019,ThanwerdasPennec2023,WongYang2022}.  Centered bosonic Gaussian quantum
states occupy a proper subdomain of the same positive cone and carry the Bures
metric induced by quantum fidelity \cite{Weedbrook2012,BraunsteinCaves1994}.
These geometries should not be identified merely because they share related
matrix formulas: their operational meanings and their domains differ.

Previous work identified the algebraic origin of the Gaussian Bures correction
as a symplectic phase-bundle reduction \cite{KerskensLMP}.  The present paper
derives the required dual covariance formula directly from the standard
Gaussian SLD equation and addresses the complementary global question: on what
domain is this metric positive and nondegenerate, what geometric structure
survives at its natural boundary, and how does the quantum correction behave in
the opposite, high-noise regime?

The intrinsic answer has three parts.  First, quantum covariance matrices
satisfy the Robertson--Schr\"odinger condition
\cite{SimonMukundaDutta1994}
\begin{equation}
    \Sigma+\frac{i\hbar}{2}\Om\geq0,
    \label{eq:uncertainty-intro}
\end{equation}
or equivalently \(\nu_k\geq\hbar/2\) for every symplectic eigenvalue.  This
condition removes a neighborhood of the origin from the classical positive
covariance cone.  Classical covariance dilation can therefore be continued
toward the origin indefinitely, whereas its restriction to the quantum domain
reaches the uncertainty boundary after finite flow parameter.

Second, the Bures geometry becomes anisotropic at this boundary.  For one mode,
the spectrum-changing radial coefficient diverges on approach from the
full-rank interior, while squeezing and phase-rotation coefficients remain
finite.  The pure Gaussian states therefore retain a regular intrinsic
geometry along their unitary orbit even though the mixed-state radial chart
becomes singular.  In the dual description the same fact appears as a loss of
rank.  The Williamson-frame decomposition extends this statement to arbitrary
mode number: if exactly \(m\) symplectic eigenvalues reach the floor, the dual
covariance form has nullity \(m^2\).  Despite this radial singularity, the
Williamson floor lies at finite Bures distance.  The associated scaling defect
\(\Phi=-\tfrac12\sum_k\log(\nu_k^2-\tfrac14)\) is a Bures potential for
covariance dilation on fixed-Williamson-frame radial submanifolds and reduces
to \(\beta F\) on fixed-Hamiltonian thermal families.

Third, there is a distinct classical regime at large symplectic eigenvalue.  In
the exact dual formula the Fisher-type term grows quadratically with covariance,
whereas the symplectic correction remains of order \(\hbar^2\).  Its relative
weight therefore vanishes as \(\hbar^2/\nu^2\).  This limit recovers the
covariance Fisher geometry to leading order.  It does not, by itself, establish
an interpolation to Bures--Wasserstein transport; any such transport reduction
requires a separately specified map or dynamics.

The two regimes have different logical status.  The high-noise limit is an
intrinsic asymptotic recovery of the covariance Fisher geometry.  The lower
spectral edge is not a classical limit: it is a pure-state quantum boundary
with finite tangential geometry and singular radial behavior.  Classical
statistics can arise there only after an externally supplied commutative or
conditioned restriction.  A specified measurement channel then maps the
quantum boundary family to classical outcome distributions and determines
which geometric directions are statistically accessible.  Measurement thus
provides an operational interface between the quantum boundary geometry and
classical statistics.  The covariance metric constrains the geometry on which
this restriction acts, but it does not select the channel or derive collapse
dynamics, a stochastic record, or a measurement apparatus.

Beyond these intrinsic results, we compare the Gaussian Bures covariance
geometry with two classical covariance geometries: the Fisher--Rao information
metric and the Bures--Wasserstein transport metric.  Introducing a constant
action-matching parameter \(\eta\), we show that the corresponding normalized
determinant densities possess an exact one-mode junction.  This equality
organizes the possible branch diagrams of the three geometries.  It is,
however, purely kinematic: it is neither an additional boundary of the Gaussian
Bures manifold nor, by itself, a dynamical transition.  Any branch-selection
interpretation requires the explicitly stated minimum-log-determinant
hypothesis.

We further compare the primal kinetic costs assigned by the Bures and
Bures--Wasserstein metrics to the same prescribed radial trajectory.  Under a
finite-rate compression toward the Williamson floor, the Bures kinetic density
and integrated action diverge, whereas the Bures--Wasserstein cost remains
finite.  This comparison establishes a conditional finite-rate cost separation
rather than a mechanism forcing a transition between the two geometries.  Under
a bounded Bures kinetic budget, the intrinsic consequence is radial
deceleration.

\paragraph{Hierarchy of claims.}
The manuscript keeps five levels logically separate.  The admissible domain,
boundary spectrum, radial potential, and high-noise asymptotics are intrinsic
results about the Gaussian Bures geometry.  Measurement-induced boundary
statistics require an explicitly supplied operational restriction.  The
three-volume equality is an exact kinematic comparison after an action scale
has been introduced for the transport form.  The branch topology is
conditional on a stated determinant selection rule.  A dynamical or
thermodynamic realization of that rule remains open.  No conclusion at a later
level is used to prove a result at an earlier one.

The paper is organized as follows.  Section~\ref{sec:manifold} constructs the
Gaussian Bures covariance geometry and determines its quantum-admissible
domain.  Section~\ref{sec:radial} develops the radial, tangential, and dual
descriptions of the Williamson boundary.  Section~\ref{sec:potential} derives
the radial dilation potential, and Section~\ref{sec:upper} establishes the
intrinsic high-noise Fisher--Rao limit.  Section~\ref{sec:triple-volume}
presents the auxiliary comparison with Fisher--Rao and Bures--Wasserstein
geometry, including the exact one-mode junction, its conditional branch
topology, and the finite-rate radial cost separation.  Section
\ref{sec:two-routes} assembles the global interpretation, distinguishes
intrinsic high-noise classicality from measurement-induced boundary statistics,
distinguishes Gaussian boundary motion from model-dependent non-Gaussian
escape, and states the limitations and open directions.

\section{Gaussian Bures geometry and its quantum domain}
\label{sec:manifold}

\subsection{Bures construction and conventions}

Let \(\hat R=(\hat x_1,\hat p_1,\ldots,\hat x_N,\hat p_N)^{\mathsf T}\) be the
quadrature vector, with
\begin{equation}
    [\hat R_a,\hat R_b]=i\hbar\Om_{ab},
    \qquad
    \Om=\bigoplus_{k=1}^N
    \begin{pmatrix}0&1\\-1&0\end{pmatrix}.
\end{equation}
For a centered state, the covariance matrix is
\begin{equation}
    \Sigma_{ab}=\frac12\langle\{\hat R_a,\hat R_b\}\rangle.
\end{equation}
We use the convention in which the vacuum covariance is
\(\Sigma_{\mathrm{vac}}=(\hbar/2)\Id\).  Most explicit formulas below set
\(\hbar=1\), so the Williamson floor is \(\nu=1/2\).  Drafts using a floor at
\(\nu=1\) are related by \(\nu_{(1)}=2\nu_{(1/2)}\); all coefficients must be
transformed consistently when comparing conventions.

Williamson's theorem gives \cite{Williamson1936,deGosson2006,Weedbrook2012}
\begin{equation}
    \Sigma=S_W\left(\bigoplus_{k=1}^N\nu_k\Id_2\right)S_W^{\mathsf T},
    \qquad S_W\Om S_W^{\mathsf T}=\Om,
    \label{eq:williamson}
\end{equation}
where \(\nu_k\geq\hbar/2\).  We write
\begin{align}
    \M_Q
    &=\left\{\Sigma\in\operatorname{Sym}_{++}(2N):
      \Sigma+\frac{i\hbar}{2}\Om\geq0\right\},\\
    \M_Q^{\circ}
    &=\left\{\Sigma\in\M_Q:\nu_k>\frac{\hbar}{2}\ \text{for all }k\right\}.
\end{align}
The Bures metric, originating from the transition probability between mixed states \cite{Bures1969,Uhlmann1976} and established as the minimal monotone quantum metric \cite{Petz1996}, is smooth on the faithful-state interior \(\M_Q^{\circ}\). Closed Gaussian-state fidelity formulas underlying this metric are given in Ref.~\cite{Banchi2015}.
At rank-changing boundaries, the limiting interior QFI must be distinguished
from the metric intrinsic to a fixed-rank boundary stratum
\cite{Safranek2017boundary}.

Following the standard formalism of quantum estimation theory \cite{Helstrom1976,Holevo2011,Paris2009}, for a smooth state family \(\rho_\theta\), the symmetric logarithmic derivative
\(L_i\) and QFI are
\begin{equation}
    \partial_i\rho=\frac12(L_i\rho+\rho L_i),
    \qquad
    F_{ij}=\frac12\Tr\!\left[\rho(L_iL_j+L_jL_i)\right].
\end{equation}
Our Bures convention is
\begin{equation}
    g^{\rm B}_{ij}=\frac14F_{ij}.
    \label{eq:bures-convention}
\end{equation}
For covariance-only variations, a real symmetric gain matrix
\(\mathfrak G_i\) satisfies
\begin{equation}
    \Sigma\mathfrak G_i\Sigma
    +\frac{\hbar^2}{4}\Om\mathfrak G_i\Om
    =\partial_i\Sigma,
    \qquad
    F_{ij}=\frac12\Tr(\mathfrak G_i\partial_j\Sigma).
    \label{eq:lyapunov}
\end{equation}
Equivalent general formulas are given in Refs.~\cite{Monras2013,Safranek2018};
information matrices and symmetric logarithmic derivatives for bosonic Gaussian
thermal states parameterized by their Hamiltonian matrices are derived in
Ref.~\cite{HuangWilde2024}.

To invert this metric, define the self-adjoint superoperator
\begin{equation*}
    \mathcal S_\Sigma(G)
    =\Sigma G\Sigma+\frac{\hbar^2}{4}\Om G\Om.
\end{equation*}
For symmetric tangent vectors \(X,Y\), Eq.~\eqref{eq:lyapunov} and the
convention \(g^{\rm B}=F/4\) give
\begin{equation*}
    g_{\rm B}(X,Y)
    =\frac18\Tr\!\left[\mathcal S_\Sigma^{-1}(X)Y\right].
\end{equation*}
Thus, with the trace pairing \(\langle P,X\rangle=\Tr(PX)\), the cometric
operator is \(8\mathcal S_\Sigma\), and
\begin{equation}
    g_{\rm B}^*(\Sigma;P,P)
    =8\Tr(P\Sigma P\Sigma)
    +2\hbar^2\Tr(P\Om P\Om),
    \label{eq:dual-schematic}
\end{equation}
for symmetric covectors \(P\).  The plus sign in front of
\(\Tr(P\Om P\Om)\) is essential; this trace is negative on the radial scalar
sector because \(\Om P\Om=-P\) there.  In units \(\hbar=1\),
Eq.~\eqref{eq:dual-schematic} becomes
\begin{equation}
    g_{\rm B}^*(\Sigma;P,P)
    =8\Tr(P\Sigma P\Sigma)+2\Tr(P\Om P\Om),
    \label{eq:dual-explicit}
\end{equation}
which is the convention used throughout the remainder of the paper.  The
coefficient pair, the resulting one-mode spectra, and every determinant
identity, class weight, and branch ordering quoted below are verified by an
archived, deterministic symbolic and numerical script~\cite{VerifyScript}.
Within the two-parameter ansatz
\begin{equation}
    A\Tr(P\Sigma P\Sigma)+B\hbar^2\Tr(P\Om P\Om),
\end{equation}
the coefficient pair is uniquely fixed as \((A,B)=(8,2)\).  On the radial
sector at
\(\Sigma=\nu\Id_2\), the covector \(P=p\Id_2\) has coordinate component
\(P_\nu=\Tr(P\,\partial_\nu\Sigma)=2p\), and Eq.~\eqref{eq:dual-explicit}
reproduces \(g_{\rm B}^{\nu\nu}P_\nu^2=16(\nu^2-\tfrac14)p^2\); the squeezing
sector likewise returns \(16(\nu^2+\tfrac14)p^2\).  These two independent
sectors determine both coefficients.

Although the algebraic origin of the symplectic correction is analyzed in the
companion construction \cite{KerskensLMP}, the metric results below are
logically self-contained: Eq.~\eqref{eq:dual-schematic} follows here from the
Gaussian SLD equation and the trace pairing, and every subsequent argument uses
only this explicit form.

\subsection{The quantum-admissible domain}
\label{sec:domain}

The positive covariance cone of centered classical Gaussians is invariant under
uniform dilation
\begin{equation}
    D_s:\Sigma\longmapsto e^s\Sigma,
    \qquad s\in\mathbb R,
    \label{eq:dilation}
\end{equation}
generated by the Euler vector field \(V_\Sigma=\Sigma\).  On the classical cone,
this flow is defined for every finite \(s\).  The quantum domain is not invariant
under backward dilation because \(\Om\) is fixed.

\begin{proposition}[Finite-parameter contact with the quantum boundary]
\label{prop:finite-contact}
Let \(\Sigma_0\in\M_Q^{\circ}\) have smallest symplectic eigenvalue
\(\nu_{\min}(\Sigma_0)\).  The backward dilation orbit
\(\Sigma(s)=e^s\Sigma_0\) remains quantum admissible precisely when
\begin{equation}
    e^s\nu_{\min}(\Sigma_0)\geq\frac{\hbar}{2}.
\end{equation}
It first meets the boundary at the finite parameter
\begin{equation}
    s_*=\log\frac{\hbar}{2\nu_{\min}(\Sigma_0)}<0.
\end{equation}
\end{proposition}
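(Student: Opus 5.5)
The plan is to reduce the admissibility of the scaled covariance \(e^s\Sigma_0\) to a statement about its symplectic spectrum, using the Williamson decomposition~\eqref{eq:williamson}. Write \(\Sigma_0=S_W\bigl(\bigoplus_k\nu_k\Id_2\bigr)S_W^{\mathsf T}\) with \(S_W\) symplectic. Then
\[
e^s\Sigma_0=S_W\Bigl(\bigoplus_k e^s\nu_k\Id_2\Bigr)S_W^{\mathsf T},
\]
and this is again a Williamson normal form with the same symplectic frame. Its symplectic eigenvalues are therefore exactly \(e^s\nu_k\); uniqueness of the Williamson spectrum guarantees this is \emph{the} spectrum. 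Equivalently, the eigenvalues of \(i\Om\Sigma\) scale linearly under \(\Sigma\mapsto e^s\Sigma\), which gives the same conclusion independently of the choice of frame. In particular \(\nu_{\min}(e^s\Sigma_0)=e^s\nu_{\min}(\Sigma_0)\), and \(e^s\Sigma_0\) stays positive definite for every real \(s\).

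Next I would invoke the equivalence, quoted in the introduction, between the Robertson--Schr\"odinger condition \(\Sigma+\tfrac{i\hbar}{2}\Om\geq0\) and \(\nu_k\geq\hbar/2\) for all \(k\). To keep the argument self-contained, I would verify this by congruence. Conjugating by \(S_W^{-1}\) and using \(S_W^{-1}\Om S_W^{-\mathsf T}=\Om\) turns the condition into positivity of \(\bigoplus_k\bigl(\nu_k\Id_2+\tfrac{i\hbar}{2}J\bigr)\), where \(J=\begin{pmatrix}0&1\\-1&0\end{pmatrix}\). Each \(2\times2\) block has eigenvalues \(\nu_k\pm\hbar/2\), because \(iJ\) is Hermitian with eigenvalues \(\pm1\). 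Combining this with the scaling step, \(e^s\Sigma_0\in\M_Q\) if and only if \(e^s\nu_k\geq\hbar/2\) for every \(k\). Since \(e^s>0\), this is equivalent to the single condition \(e^s\nu_{\min}(\Sigma_0)\geq\hbar/2\).

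Finally, the map \(s\mapsto e^s\nu_{\min}(\Sigma_0)\) is continuous and strictly increasing, so the admissible set of parameters is the half-line \(s\geq s_*\) with \(s_*=\log\bigl(\hbar/(2\nu_{\min}(\Sigma_0))\bigr)\). At \(s=s_*\) the smallest symplectic eigenvalue equals \(\hbar/2\), so the orbit lies in \(\M_Q\setminus\M_Q^{\circ}\); for \(s>s_*\) it remains in the interior. This identifies \(s_*\) as the first boundary contact along the backward flow. Because \(\Sigma_0\in\M_Q^{\circ}\), we have \(\nu_{\min}(\Sigma_0)>\hbar/2\), which gives \(s_*<0\). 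The parameter is also finite, since \(\nu_{\min}(\Sigma_0)\) is finite.

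No step is a serious obstacle. The only point requiring care is the claim that scaling commutes with the Williamson spectrum, as opposed to only the ordinary eigenvalues of \(\Sigma_0\). The frame-invariant check through \(i\Om\Sigma\) settles it.
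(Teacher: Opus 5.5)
Your proof is correct and follows the same route as the paper: uniform dilation multiplies every symplectic eigenvalue by \(e^s\), the Robertson--Schr\"odinger condition becomes \(e^s\nu_k\geq\hbar/2\) for all \(k\), and the binding inequality is the one for \(\nu_{\min}\), which gives \(s_*\). The only difference is that you also prove the equivalence between \(\Sigma+\tfrac{i\hbar}{2}\Om\geq0\) and \(\nu_k\geq\hbar/2\) by Williamson congruence, whereas the paper takes that equivalence as the standard Simon--Mukunda--Dutta result.
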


\begin{proof}
Uniform covariance dilation multiplies every symplectic eigenvalue by \(e^s\).
The uncertainty condition is therefore equivalent to
\(e^s\nu_k(\Sigma_0)\geq\hbar/2\) for all \(k\).  The first saturated inequality
is the one associated with \(\nu_{\min}\), giving the stated value of \(s_*\).
\end{proof}

\begin{remark}
Proposition~\ref{prop:finite-contact} concerns completeness of a particular
restricted vector field, not geodesic completeness of the Bures metric.  In
fact, Section~\ref{sec:radial} shows that the one-mode boundary is at finite
radial Bures distance even though the radial metric coefficient diverges.
\end{remark}

The boundary is stratified by the set of Williamson modes that reach the
floor.  The exact covariance-cometric nullity of each such stratum is derived
in Theorem~\ref{thm:multimode-spectrum}.

This intrinsic Williamson boundary should not be confused with an externally
imposed ordinary eigenvalue floor in classical covariance mechanics.  In the
Hamiltonian lift of Bures--Wasserstein covariance dynamics developed in
Ref.~\cite{KerskensBWHamiltonian}, a condition of the form
\(\Sigma\succ\sigma_{\rm f}\Id\) is enforced by an added logarithmic barrier.
The resulting divergence is a potential-induced stiffness.  Here the boundary
is instead fixed by quantum admissibility, and the singularity is an
anisotropic rank change of the quantum Bures cometric itself.

\section{Geometry of the Williamson boundary}
\label{sec:radial}

\subsection{Radial singularity and finite Bures distance}

For fixed Williamson frame \(S_W\), define the radial submanifold
\begin{equation}
    \M_{S_W}
    =\left\{
    S_W\left(\bigoplus_{k=1}^{N}\nu_k\Id_2\right)S_W^{\mathsf T}:
    \nu_k>\frac12
    \right\},
    \label{eq:radial-manifold}
\end{equation}
where this and the following explicit formulas set \(\hbar=1\).  The Euler field
restricts to
\begin{equation}
    V=\sum_{k=1}^{N}\nu_k\partial_{\nu_k}.
    \label{eq:euler}
\end{equation}
The fixed-frame restriction is essential: varying \(S_W\) introduces
mode-mixing, entangling, squeezing, and rotation directions.

For one mode we write
\begin{equation}
    \Sigma(\nu,r,\varphi)=\nu G(r,\varphi),
    \qquad
    G=R(\varphi)
    \begin{pmatrix}e^{2r}&0\\0&e^{-2r}\end{pmatrix}
    R(\varphi)^{\mathsf T},
    \label{eq:onemode-covariance}
\end{equation}
with
\begin{equation}
    R(\varphi)=
    \begin{pmatrix}
    \cos\varphi&-\sin\varphi\\
    \sin\varphi&\cos\varphi
    \end{pmatrix}.
\end{equation}
Here \(V=\nu\partial_\nu\) at fixed \((r,\varphi)\).

\begin{proposition}[Single-mode Gaussian QFI]
\label{prop:single-mode-qfi}
For the covariance family in Eq.~\eqref{eq:onemode-covariance}, with
\(\nu>1/2\), the QFI matrix in coordinates \((\nu,r,\varphi)\) is diagonal
(verified in Ref.~\cite{VerifyScript}, check~C1):
\begin{equation}
    F_{ij}=\operatorname{diag}\!\left(
    \frac{1}{\nu^2-\frac14},
    \frac{4\nu^2}{\nu^2+\frac14},
    \frac{4\nu^2\sinh^2(2r)}{\nu^2+\frac14}
    \right).
    \label{eq:full-qfi}
\end{equation}
The angle \(\varphi\) is the phase-space rotation angle.  If instead the
squeeze-operator phase is \(\theta=2\varphi\), then
\(F_{\theta\theta}=F_{\varphi\varphi}/4\).
\end{proposition}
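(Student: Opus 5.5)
The plan is to compute $F_{ij}=\tfrac12\Tr(\mathfrak G_i\partial_j\Sigma)$ from Eq.~\eqref{eq:lyapunov}, solving the gain equation $\mathcal S_\Sigma(\mathfrak G_i)=\partial_i\Sigma$ explicitly for each coordinate direction. Symplectic covariance makes this tractable. Write $\Sigma=\nu SS^{\mathsf T}$ with $S=R(\varphi)\operatorname{diag}(e^r,e^{-r})$, which is symplectic, so $S\Om S^{\mathsf T}=\Om$. Substituting $\mathfrak G=S^{-\mathsf T}\tilde G S^{-1}$ and using $S^{-1}\Om S^{-\mathsf T}=\Om$, the gain equation becomes
\begin{equation*}
\nu^2\tilde G+\tfrac14\Om\tilde G\Om=\tilde X,\qquad \tilde X=S^{-1}(\partial_i\Sigma)S^{-\mathsf T},
\end{equation*}
and the pairing is unchanged: $\Tr(\mathfrak G\,\partial_j\Sigma)=\Tr(\tilde G\tilde X_j)$. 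So the whole problem reduces to the Williamson point $\Sigma=\nu\Id_2$.

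At that point, split $2\times2$ real symmetric matrices into the trace part $\Id_2$ and the traceless part spanned by $\sigma_z,\sigma_x$. A direct check gives $\Om\Id\Om=-\Id$ and $\Om\sigma_{z,x}\Om=+\sigma_{z,x}$. Hence $\mathcal S$ acts as the scalar $\nu^2-\tfrac14$ on the trace part and as $\nu^2+\tfrac14$ on the traceless part, and it can be inverted blockwise. This is exactly the sector split the paper already uses to fix $(A,B)=(8,2)$.

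Next compute the transported tangent vectors:
\begin{itemize}
\item $\tilde X_\nu=\Id_2$, since $\partial_\nu\Sigma=SS^{\mathsf T}$.
\item $\tilde X_r=2\nu\sigma_z$, from $\partial_r(SS^{\mathsf T})=R\,(2\operatorname{diag}(e^{2r},-e^{-2r}))R^{\mathsf T}$.
\item $\tilde X_\varphi=\nu S^{-1}(JG-GJ)S^{-\mathsf T}$, from $\partial_\varphi(RDR^{\mathsf T})=JRDR^{\mathsf T}-RDR^{\mathsf T}J$ with $J=R'R^{\mathsf T}=\begin{pmatrix}0&-1\\1&0\end{pmatrix}$.
\end{itemize}
For the rotation direction, simplify in the rotated frame: $\tilde X_\varphi=\nu\bigl(D^{-1/2}JD^{1/2}-D^{1/2}JD^{-1/2}\bigr)$ with $D=\operatorname{diag}(e^{2r},e^{-2r})$. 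This is symmetric and traceless, and working out the entries should give $\tilde X_\varphi=-2\nu\sinh(2r)\,\sigma_x$ up to sign.

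The QFI entries then follow:
\begin{itemize}
\item $F_{\nu\nu}=\tfrac12\Tr(\Id)/(\nu^2-\tfrac14)=1/(\nu^2-\tfrac14)$.
\item $F_{rr}=\tfrac12\,\Tr\bigl((2\nu)^2\sigma_z^2\bigr)/(\nu^2+\tfrac14)=4\nu^2/(\nu^2+\tfrac14)$.
\item $F_{\varphi\varphi}=4\nu^2\sinh^2(2r)/(\nu^2+\tfrac14)$.
\end{itemize}
The off-diagonal entries vanish for two reasons. The trace and traceless blocks are $\Tr$-orthogonal and invariant under $\mathcal S$. Within the traceless block, $\sigma_z$ and $\sigma_x$ are orthogonal and $\mathcal S$ acts on that block as a scalar.

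The $\theta$ statement is the chain rule: $\partial_\theta=\tfrac12\partial_\varphi$ gives $F_{\theta\theta}=F_{\varphi\varphi}/4$.

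The main obstacle is expected to be the rotation tangent $\tilde X_\varphi$. The signs and the factor $\sinh 2r$, rather than $\sinh^2 r$ or $\cosh$, need care. The check to run is that the entries of $D^{-1/2}JD^{1/2}$ are $\mp e^{\mp 2r}$ off the diagonal, so the difference produces $e^{2r}-e^{-2r}=2\sinh 2r$. A secondary point is confirming that $S$ is genuinely symplectic, which holds because $\det=1$ in one mode, so that the reduction step is legitimate.
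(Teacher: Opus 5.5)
Your proposal is correct and follows essentially the paper's route in Appendix~\ref{app:qfi}: it uses the symplectic covariance of Eq.~\eqref{eq:lyapunov} to move to a canonical point, solves the gain equation sector by sector, and gets the vanishing cross terms from trace orthogonality. The one difference is that you conjugate all the way to the Williamson representative $\nu\Id_2$, removing $r$ as well as $\varphi$, whereas the paper only sets $\varphi=0$ and solves with explicit diagonal and off-diagonal gains at $\Sigma=\nu D$; in your frame $\mathcal S$ is simply the scalar $\nu^2\mp\tfrac14$ on each sector, and the transported tangents $\Id_2$, $2\nu\sigma_z$, $\pm2\nu\sinh(2r)\,\sigma_x$ are manifestly trace-orthogonal, which makes the diagonality slightly cleaner than in the paper.
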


The derivation from Eq.~\eqref{eq:lyapunov} is given in
Appendix~\ref{app:qfi}; after conversion of covariance and angle conventions,
the result agrees with the general single-mode expressions of
Ref.~\cite{Pinel2013}.  The radial entry can also be checked in the thermal
Fock basis.  With mean occupation \(\bar n=\nu-1/2\),
\begin{equation}
    p_m=\frac{\bar n^m}{(1+\bar n)^{m+1}},
\end{equation}
and the classical Fisher information of these commuting probabilities is
\begin{equation}
    \sum_m p_m(\partial_\nu\log p_m)^2
    =\frac{1}{\bar n(1+\bar n)}
    =\frac{1}{\nu^2-\frac14}.
\end{equation}

The radial Bures coefficient and its inverse are
\begin{equation}
    g^{\rm B}_{\nu\nu}
    =\frac{1}{4(\nu^2-\frac14)},
    \qquad
    g_{\rm B}^{\nu\nu}=4(\nu^2-\tfrac14).
    \label{eq:radial-components}
\end{equation}
Thus the primal radial coefficient diverges, while the dual radial mobility
vanishes, as \(\nu\downarrow1/2\).

\begin{proposition}[Finite radial distance to the pure boundary]
\label{prop:finite-distance}
For \(\nu_0>1/2\), the radial Bures length at fixed \((r,\varphi)\) from
\(\nu_0\) to the boundary is finite:
\begin{equation}
    \ell(\nu_0,\tfrac12)
    =\int_{1/2}^{\nu_0}\sqrt{g^{\rm B}_{\nu\nu}}\,d\nu
    =\frac12\arccosh(2\nu_0).
    \label{eq:finite-distance}
\end{equation}
\end{proposition}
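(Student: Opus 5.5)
The plan is to reduce the statement to a one-dimensional integral, using the radial coefficient already recorded in Eq.~\eqref{eq:radial-components}. Along the curve \(\nu\mapsto\Sigma(\nu,r,\varphi)\) with \((r,\varphi)\) held fixed, only \(\partial_\nu\) is nonzero. Proposition~\ref{prop:single-mode-qfi} shows that the QFI matrix is diagonal in \((\nu,r,\varphi)\), so no cross terms with the squeezing or rotation directions can enter. Under the convention \(g^{\rm B}=F/4\) of Eq.~\eqref{eq:bures-convention}, the line element along this curve is therefore
\[
ds=\sqrt{g^{\rm B}_{\nu\nu}}\,d\nu=\frac{d\nu}{2\sqrt{\nu^2-\frac14}}.
\]
The Bures length of the radial segment from \(\nu\) to \(\nu_0\) is the integral of this expression. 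This establishes the first equality in Eq.~\eqref{eq:finite-distance}, with the lower endpoint understood as a limit.

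Next I evaluate the integral. For \(\nu_1>1/2\) I substitute \(2\nu=\cosh u\) with \(u\ge 0\). Then \(d\nu=\tfrac12\sinh u\,du\) and \(\sqrt{\nu^2-\tfrac14}=\tfrac12\sinh u\). The integrand becomes \(\tfrac12\,du\), so
\[
\int_{\nu_1}^{\nu_0}\frac{d\nu}{2\sqrt{\nu^2-\frac14}}=\frac12\bigl[\arccosh(2\nu_0)-\arccosh(2\nu_1)\bigr].
\]
Equivalently, one can differentiate \(\tfrac12\arccosh(2\nu)\) and check that the derivative is \(1/(2\sqrt{4\nu^2-1})\cdot 2\), which equals the integrand.

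The only delicate point is the endpoint \(\nu=1/2\), where the integrand is singular. Near that endpoint \(\nu^2-\tfrac14=(\nu-\tfrac12)(\nu+\tfrac12)\sim(\nu-\tfrac12)\), so the integrand behaves like \((\nu-\tfrac12)^{-1/2}/2\). This is an integrable singularity, so the improper integral converges. I take the limit \(\nu_1\downarrow 1/2\). By continuity of \(\arccosh\) at \(1\) with \(\arccosh(1)=0\), the length tends to \(\tfrac12\arccosh(2\nu_0)\), which is finite.

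I expect no real obstacle. The one step that needs a sentence of justification is that this limit is the correct notion of length to the boundary. The Bures metric is only smooth on \(\M_Q^\circ\), so the length is defined as the limit of lengths of interior segments. That limit is monotone in \(\nu_1\) and bounded, so it exists. It is worth remarking that the value is the length of this particular radial curve, which gives an upper bound on the Bures distance to the boundary; the statement only claims the radial length, so this suffices.
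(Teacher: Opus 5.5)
Your proposal is correct and follows the paper's route: integrate $\sqrt{g^{\rm B}_{\nu\nu}}$ from Eq.~\eqref{eq:radial-components} using $2\nu=\cosh u$, with finiteness coming from the integrable $(\nu-\tfrac12)^{-1/2}$ singularity at the floor, which is exactly the paper's accompanying remark. One small simplification: you do not need diagonality of the QFI, because along a curve on which only $\nu$ varies, only $g^{\rm B}_{\nu\nu}$ enters the length element.
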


The combination of a divergent coordinate coefficient and finite integrated
distance is not contradictory.  Near the boundary, writing
\(\delta=\nu-1/2\), one has
\(g^{\rm B}_{\nu\nu}\sim(4\delta)^{-1}\), whose square root is integrable.

\subsection{Tangential geometry at the pure Gaussian boundary}
\label{sec:tangential}

Equation~\eqref{eq:full-qfi} separates the spectrum-changing direction from
unitary shape changes.  Taking the interior limit gives
\begin{equation}
    F_{rr}\longrightarrow2,
    \qquad
    F_{\varphi\varphi}\longrightarrow2\sinh^2(2r),
    \qquad \nu\downarrow\frac12.
    \label{eq:tangential-limits}
\end{equation}
The coordinate \(\varphi\) becomes redundant at \(r=0\), explaining the
vanishing angular coefficient there.

\begin{theorem}[Anisotropic pure-state boundary]
\label{thm:anisotropic-boundary}
On the one-mode Gaussian manifold, the mixed-state interior Bures metric is
singular only in the spectrum-changing radial direction as
\(\nu\downarrow1/2\).  The squeezing and phase-rotation coefficients have finite
limits and define the intrinsic Bures, equivalently Fubini--Study, geometry
tangent to the pure Gaussian-state manifold.
\end{theorem}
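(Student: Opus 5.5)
The proof uses the explicit diagonal QFI of Proposition~\ref{prop:single-mode-qfi} together with the Bures convention $g^{\rm B}=F/4$ from Eq.~\eqref{eq:bures-convention}. In the coordinates $(\nu,r,\varphi)$ the interior metric is
\begin{equation*}
    g^{\rm B}=\frac{d\nu^2}{4(\nu^2-\frac14)}
    +\frac{\nu^2}{\nu^2+\frac14}\,dr^2
    +\frac{\nu^2\sinh^2(2r)}{\nu^2+\frac14}\,d\varphi^2 .
\end{equation*}
Since this form is diagonal, there are no mixed radial--tangential terms that could carry a singularity into the tangential block. The question of which directions degenerate therefore reduces to the three diagonal coefficients. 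The argument then has three steps.

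\textbf{Step 1: the radial coefficient.} This coefficient is $1/(4(\nu^2-\tfrac14))$, so it diverges like $(4\delta)^{-1}$ with $\delta=\nu-\tfrac12$. Its inverse, Eq.~\eqref{eq:radial-components}, vanishes in the same limit.

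\textbf{Step 2: the tangential coefficients.} These are rational in $\nu$ with denominator $\nu^2+\tfrac14\geq\tfrac12$, so they are smooth across $\nu=\tfrac12$. Their limits are $\tfrac12\,dr^2+\tfrac12\sinh^2(2r)\,d\varphi^2$, which is the content of Eq.~\eqref{eq:tangential-limits} divided by~4. This shows that the radial direction is the only singular one. I will add the caveat already noted in the text: the vanishing of the $\varphi$ coefficient at $r=0$ is the usual polar-coordinate degeneracy. It is not a genuine degeneracy of the metric, because $dr^2+\sinh^2(2r)\,d\varphi^2$ is smooth on the $(r,\varphi)$ plane, in the same way that $d\rho^2+\rho^2\,d\varphi^2$ is.

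\textbf{Step 3: identification with the pure-state geometry.} At $\nu=\tfrac12$ the states are pure squeezed vacua $|\psi_{r,\varphi}\rangle=U(r,\varphi)|0\rangle$. I will compute their Fubini--Study metric directly,
\begin{equation*}
    ds^2_{\rm FS}=\langle\partial\psi|\partial\psi\rangle-|\langle\psi|\partial\psi\rangle|^2,
\end{equation*}
which equals $F/4$ for pure states. The computation is shortest in the pure-state covariance form. For pure Gaussian states the SLD gain equation~\eqref{eq:lyapunov} still has solutions for tangential variations $\partial\Sigma$. Writing $\Sigma=\tfrac12 G$ with $G$ symplectic, the tangent space to the orbit consists of the $\partial\Sigma$ satisfying $\Sigma\Omega\partial\Sigma+\partial\Sigma\Omega\Sigma=0$. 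On this space one has $\mathcal S_\Sigma$ invertible. Evaluating it gives $F_{rr}=2$ and $F_{\varphi\varphi}=2\sinh^2(2r)$ directly. This agrees with the interior limit, so the limiting tangential metric coincides with the intrinsic pure-stratum metric, in the sense of Ref.~\cite{Safranek2017boundary}. As a cross-check, the result is the hyperbolic-disk metric on the squeezed-vacuum orbit $SU(1,1)/U(1)$, namely $\tfrac12(dr^2+\sinh^2(2r)\,d\varphi^2)$ after the rescaling $\rho=2r$.

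\textbf{Main obstacle.} The hard part is Step 3, because continuity of the QFI at a rank-changing boundary cannot be assumed. Proposition~\ref{prop:single-mode-qfi} is stated only for $\nu>\tfrac12$, and the interior limit of the QFI can in general differ from the QFI of the boundary stratum. My first approach will be to solve the gain equation on the pure stratum using the symplectic covariance $\mathcal S_{S\Sigma S^{\mathsf T}}(S^{-\mathsf T}GS^{-1})=S\,\mathcal S_\Sigma(G)S^{\mathsf T}$. This reduces the computation to the vacuum $\Sigma=\tfrac12\Id$. There $\mathcal S$ acts as $\tfrac14(G+\Omega G\Omega)$, which annihilates the radial scalar sector but is $\tfrac12 G$ on the traceless symmetric (squeezing) sector. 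Because $\mathcal S$ is invertible exactly on the tangential directions, the finite values follow and match the interior limit.
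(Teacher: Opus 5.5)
Your Steps 1 and 2 match the paper's proof, which reads the radial divergence and the finite shape limits directly off Proposition~\ref{prop:single-mode-qfi} with $g^{\rm B}=F/4$. Step 3 takes a genuinely different route.

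The paper does no computation on the boundary stratum. It notes that at $\nu=1/2$ the $r$ and $\varphi$ variations are Gaussian unitaries that stay in the pure orbit, and cites the general fact that the Bures metric on pure states is the Fubini--Study metric. That says what the intrinsic boundary geometry is. It leaves implicit, however, that the interior limits in Eq.~\eqref{eq:tangential-limits} actually equal it, and this is exactly the rank-change caveat of Appendix~\ref{app:conventions}.

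Your argument closes that gap. You reduce to the vacuum via $\mathcal S_{S\Sigma S^{\mathsf T}}(S^{-\mathsf T}GS^{-1})=S\,\mathcal S_\Sigma(G)S^{\mathsf T}$. There $\mathcal S$ acts as $\tfrac12$ on the traceless (orbit-tangent) sector, so $F=\Tr[(\partial\widetilde\Sigma)^2]$. This gives $F_{rr}=2$ and $F_{\varphi\varphi}=2\sinh^2(2r)$ on the stratum itself, so you prove the continuity the theorem asserts. The paper's version is shorter but argues only by citation.

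What you still need is a justification for using Eq.~\eqref{eq:lyapunov} at a non-faithful state. Say that the Gaussian SLD equation reduces to the gain equation without using faithfulness. Also note that for a pure state any two solutions of $\partial\rho=\tfrac12\{L,\rho\}$ differ by an $M$ with $M|\psi\rangle=0$, so $\Tr(\rho L^2)$ is unambiguous. Alternatively, the direct Fubini--Study check takes two lines: at the vacuum the squeezing SLD is $\pm(a^2+a^{\dagger2})$, giving $\langle0|(a^2+a^{\dagger2})^2|0\rangle=2$.

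Finally, in Step 2 the removability of the $r=0$ degeneracy relies on $\varphi$ having period $\pi$. With $\rho=2r$ and $\theta=2\varphi$, the tangential block is $\tfrac14(d\rho^2+\sinh^2\!\rho\,d\theta^2)$, which is smooth at the origin.
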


\begin{proof}
The divergence of the radial component and the finite limits of the two shape
components follow from Proposition~\ref{prop:single-mode-qfi} and
Eq.~\eqref{eq:bures-convention}.  At \(\nu=1/2\), variations in \(r\) and
\(\varphi\) are generated by Gaussian unitaries and remain within the pure-state
orbit.  The Bures metric restricted to pure states equals the Fubini--Study
metric, with the usual convention-dependent constant
\cite{BengtssonZyczkowski2017}.
\end{proof}

Theorem~\ref{thm:anisotropic-boundary} rules out a simple identification of the
spectral floor with the end of Gaussian geometry.  What ends is the faithful
radial direction.  Pure Gaussian motion tangent to the boundary remains
available.

\subsection{Degeneration of the covariance cometric}
\label{sec:dual-degeneration}

The radial divergence has an equivalent cotangent description.

\begin{theorem}[Radial nullity at the pure-state boundary]
\label{thm:null-mobility}
Consider a one-mode centered Gaussian state at its Williamson representative
\(\Sigma=\nu\Id_2\), and define
\begin{equation}
    x=\frac{2\nu}{\hbar}.
\end{equation}
Let
\begin{equation}
    E_0=\frac{\Id_2}{\sqrt2}
\end{equation}
be the trace-normalized radial covector, and let \(E_1,E_2\) be any
trace-orthonormal basis of the traceless symmetric covectors.  Then
Eq.~\eqref{eq:dual-schematic} gives
\begin{align}
    g_{\rm B}^*(E_0,E_0)
    &=2\hbar^2(x^2-1),\label{eq:radial-dual-eigenvalue}\\
    g_{\rm B}^*(E_a,E_a)
    &=2\hbar^2(x^2+1),
    \qquad a=1,2,\label{eq:shape-dual-eigenvalues}
\end{align}
and all cross terms vanish at the Williamson representative.

Consequently, as \(x\downarrow1\), the radial eigenvalue vanishes while the
two shape eigenvalues approach \(4\hbar^2\).  The limiting one-mode covariance
cometric therefore loses exactly one rank.  Equivalently, the primal metric
diverges in the spectrum-changing radial direction while the intrinsic metric
tangent to the pure Gaussian orbit remains finite.

The Bures covariance geometry consequently has no positive-definite radial
continuation through the Williamson floor.  This is a statement about the
geometry of the admissible covariance domain; it does not by itself imply
dynamical arrest or specify motion along the boundary. The two eigenvalue formulas and the vanishing cross terms are verified in Ref.~\cite{VerifyScript} (check~C2).
\end{theorem}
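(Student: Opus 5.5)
Our plan is to substitute directly into the cometric Eq.~\eqref{eq:dual-schematic} at \(\Sigma=\nu\Id_2\) and work only with \(2\times2\) matrix identities. At the Williamson representative, \(\Tr(P\Sigma P\Sigma)=\nu^2\Tr(P^2)\). For the symplectic term we use that in one mode \(\Om=J\), with \(J^2=-\Id_2\) and \(J^{\mathsf T}=-J\). The key identity is: for any symmetric \(2\times2\) matrix \(P\),
\[
J P J = -\bigl(\Tr(P)\Id_2 - P\bigr).
\]
Equivalently, \(-JPJ=J P J^{\mathsf T}\) is the adjugate of \(P\). We check this on the basis \(\Id_2\), \(\sigma_z\), \(\sigma_x\). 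We find \(J\Id_2J=-\Id_2\), while for traceless symmetric \(P\) we get \(JPJ=P\), since \(J\) anticommutes with \(\sigma_z\) and \(\sigma_x\).

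Hence \(\Om E_0\Om=-E_0\), giving \(\Tr(E_0\Om E_0\Om)=-\Tr(E_0^2)=-1\). For traceless \(E_a\) we have \(\Om E_a\Om=E_a\), giving \(\Tr(E_a\Om E_a\Om)=+1\). Substituting with \(\Tr(E^2)=1\):
\[
g^*_{\rm B}(E_0,E_0)=8\nu^2-2\hbar^2=2\hbar^2(x^2-1),\qquad
g^*_{\rm B}(E_a,E_a)=8\nu^2+2\hbar^2=2\hbar^2(x^2+1),
\]
using \(8\nu^2=2\hbar^2x^2\). This reproduces Eqs.~\eqref{eq:radial-dual-eigenvalue}--\eqref{eq:shape-dual-eigenvalues}.

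For the cross terms we polarize the quadratic form:
\[
g^*(P,Q)=8\nu^2\Tr(PQ)+2\hbar^2\Tr(P\Om Q\Om).
\]
Each of \(\Tr(PQ)\) and \(\Tr(P\Om Q\Om)=\pm\Tr(PQ)\) vanishes whenever \(P,Q\) are trace-orthogonal and each lies in one of the two \(\Om\)-conjugation eigenspaces (scalars or traceless). This covers both \(E_0\) versus \(E_a\) and \(E_1\) versus \(E_2\). So the form is diagonal in any such basis, independently of the choice of \(E_1,E_2\). The cleanest way to present this is that \(\mathcal S_\Sigma\) acts as \(\nu^2\mp\hbar^2/4\) on the two eigenspaces of \(P\mapsto\Om P\Om\), which are trace-orthogonal.

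The consequences are immediate. As \(x\downarrow1\), the radial eigenvalue tends to \(0\) and the shape eigenvalues tend to \(4\hbar^2\), so the limiting form has rank \(2\) on the \(3\)-dimensional symmetric covector space. Inverting on the diagonal basis shows that the primal metric has radial entry \(\propto(x^2-1)^{-1}\to\infty\), while its shape entries stay finite. This is consistent with Eq.~\eqref{eq:radial-components} and Theorem~\ref{thm:anisotropic-boundary}.

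For \(x<1\) the radial eigenvalue is negative, so no positive-definite radial continuation exists through the floor. The ``no continuation'' statement needs only that the analytic expression changes sign. We should state it that way and not claim a uniqueness-of-extension result. The only real care point is the sign and normalization bookkeeping: the \(8\) versus \(2\hbar^2\) factors and \(\Tr(E_0^2)=1\).
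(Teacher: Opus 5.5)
Your proposal is correct and follows essentially the same route as the paper. Both substitute \(\Sigma=\nu\Id_2\) into Eq.~\eqref{eq:dual-schematic}, use \(\Om E_0\Om=-E_0\) and \(\Om E_a\Om=E_a\) with \(\Tr(E_\alpha^2)=1\), obtain the vanishing cross terms from trace-orthogonality of the two \(\Om\)-conjugation eigenspaces, and read off the rank loss and the primal divergence by inversion on the interior. Your polarization step also makes explicit that the \(E_1\)--\(E_2\) cross term vanishes, which the paper's appeal to sector orthogonality leaves implicit, and your sign-change reading of the no-continuation clause matches Proposition~\ref{prop:positivity-cone}.
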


\begin{proof}
At \(\Sigma=\nu\Id_2\), the scalar covector satisfies
\(\Om E_0\Om=-E_0\), whereas each traceless symmetric covector satisfies
\(\Om E_a\Om=E_a\).  Since \(\Tr(E_\alpha^2)=1\), substitution into
Eq.~\eqref{eq:dual-schematic}, followed by \(\nu=x\hbar/2\), yields
Eqs.~\eqref{eq:radial-dual-eigenvalue} and
\eqref{eq:shape-dual-eigenvalues}.  Orthogonality of the scalar and traceless
sectors gives the vanishing cross terms.  The limiting rank and the corresponding
primal behavior follow by inversion on the faithful interior.
\end{proof}

\begin{theorem}[Multimode Williamson-frame spectrum and boundary nullity]
\label{thm:multimode-spectrum}
Let
\begin{equation}
    \Sigma=SDS^{\mathsf T},
    \qquad
    D=\bigoplus_{k=1}^{N}\nu_k\Id_2,
    \qquad
    S\in\operatorname{Sp}(2N,\mathbb R),
\end{equation}
be the Williamson decomposition of an admissible covariance matrix, and set
\begin{equation}
    x_k=\frac{2\nu_k}{\hbar}\geq1.
\end{equation}
Relative to the trace pairing, the dual Bures quadratic form is represented by
\begin{equation}
    \mathcal G_{{\rm B},\Sigma}(P)
    =8\Sigma P\Sigma+2\hbar^2\Om P\Om.
    \label{eq:multimode-cometric-operator}
\end{equation}

At the Williamson representative \(D\), the cotangent space
\(\operatorname{Sym}(2N)\) decomposes into the following invariant sectors:
\begin{enumerate}
    \item \emph{Local scalar sectors.}  For every mode \(k\), the scalar
    diagonal block has weight
    \begin{equation}
        \lambda_{{\rm rad},k}=2\hbar^2(x_k^2-1),
    \end{equation}
    with total multiplicity \(N\).

    \item \emph{Local traceless sectors.}  Each mode has two traceless
    symmetric directions, both with weight
    \begin{equation}
        \lambda_{{\rm shape},k}=2\hbar^2(x_k^2+1),
    \end{equation}
    with total multiplicity \(2N\).

    \item \emph{Negative-parity cross-mode sectors.}  For each unordered pair
    \(k<l\), there are two independent directions with weight
    \begin{equation}
        \lambda_{-,kl}=2\hbar^2(x_kx_l-1),
    \end{equation}
    with total multiplicity \(N(N-1)\).

    \item \emph{Positive-parity cross-mode sectors.}  For each unordered pair
    \(k<l\), there are two independent directions with weight
    \begin{equation}
        \lambda_{+,kl}=2\hbar^2(x_kx_l+1),
    \end{equation}
    with total multiplicity \(N(N-1)\).
\end{enumerate}
The multiplicities sum to
\begin{equation}
    N+2N+2N(N-1)=N(2N+1)
    =\dim\operatorname{Sym}(2N).
\end{equation}

For a general covariance matrix \(\Sigma=SDS^{\mathsf T}\), these are the
classwise weights in the Williamson-adapted cotangent frame.  The quadratic
forms at \(\Sigma\) and \(D\) are related by an invertible congruence and
therefore have the same inertia and nullity, although their ordinary
eigenvalues relative to a fixed laboratory trace basis need not coincide.

Let
\begin{equation}
    \mathcal P=\{k:x_k=1\},
    \qquad
    m=|\mathcal P|
\end{equation}
be the set and number of pure Williamson modes.  Then
\begin{equation}
    \dim\ker g_{\rm B}^*=m^2.
    \label{eq:multimode-nullity}
\end{equation}
The kernel consists of one local scalar direction for every
\(k\in\mathcal P\), together with two negative-parity cross-mode directions
for every unordered pair \(k<l\) with \(k,l\in\mathcal P\).  All remaining
sectors are strictly positive.  In particular,
\begin{equation}
    \lambda_{{\rm shape},k}\geq4\hbar^2,
    \qquad
    \lambda_{+,kl}\geq4\hbar^2.
\end{equation}
Thus the Bures covariance cometric is nondegenerate in the faithful interior
and loses rank precisely when at least one symplectic eigenvalue reaches the
Williamson floor.
\end{theorem}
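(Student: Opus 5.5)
The plan is to evaluate the operator in Eq.~\eqref{eq:multimode-cometric-operator} at the Williamson representative \(D\) by splitting a symmetric covector into \(2\times2\) blocks, and then to transport the result to a general \(\Sigma=SDS^{\mathsf T}\) by congruence. Write \(P\in\operatorname{Sym}(2N)\) in blocks \(P_{kl}\), with \(P_{lk}=P_{kl}^{\mathsf T}\). Since \(D\) and \(\Om\) are both block diagonal, with blocks \(\nu_k\Id_2\) and \(J=\begin{pmatrix}0&1\\-1&0\end{pmatrix}\), we get
\begin{equation*}
    (8DPD+2\hbar^2\Om P\Om)_{kl}
    =8\nu_k\nu_lP_{kl}+2\hbar^2JP_{kl}J
    =2\hbar^2\bigl(x_kx_lP_{kl}+JP_{kl}J\bigr).
\end{equation*}
So the operator acts block by block. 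It remains to diagonalize the map \(M\mapsto JMJ\) on \(2\times2\) real matrices. This map is an involution up to sign: using \(J^{-1}=-J\), we have \(JMJ=-JMJ^{-1}\). Hence its eigenvalue is \(-1\) on the span of \(\{\Id_2,J\}\), which commutes with \(J\), and \(+1\) on the span of the two traceless symmetric matrices \(\sigma_z,\sigma_x\), which anticommute with \(J\).

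For a diagonal block \(k=l\), \(P_{kk}\) must be symmetric. The choice \(\Id_2\) gives weight \(2\hbar^2(x_k^2-1)\). The two traceless choices give \(2\hbar^2(x_k^2+1)\). This reproduces Theorem~\ref{thm:null-mobility} mode by mode.

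For an off-diagonal pair \(k<l\), the block \(P_{kl}\) is an arbitrary \(2\times2\) matrix, and \(P_{lk}\) is determined by it. Splitting \(P_{kl}\) into its \(\{\Id_2,J\}\) part and its \(\{\sigma_z,\sigma_x\}\) part gives two directions of weight \(2\hbar^2(x_kx_l-1)\) and two of weight \(2\hbar^2(x_kx_l+1)\). All of these subspaces are mutually trace-orthogonal, and the operator preserves each one. The quadratic form \(\Tr(P\,\mathcal G_{{\rm B},D}(P))\) is therefore diagonal in this basis. Counting dimensions gives \(N+2N+4\binom N2=N(2N+1)\).

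To pass to a general \(\Sigma=SDS^{\mathsf T}\), use \(S^{\mathsf T}\Om S=\Om\), which gives \(\Om=S\Om S^{\mathsf T}\) and hence \(S^{-1}\Om=\Om S^{\mathsf T}\). Substituting \(P=S^{-\mathsf T}QS^{-1}\) yields
\begin{equation*}
    \Tr\bigl(P\Sigma P\Sigma\bigr)=\Tr\bigl(QDQD\bigr),
    \qquad
    \Tr\bigl(P\Om P\Om\bigr)=\Tr\bigl(Q\Om Q\Om\bigr),
\end{equation*}
where the second identity uses \(S^{-1}\Om S^{-\mathsf T}=\Om\). Thus \(g^*_{{\rm B},\Sigma}(P,P)=g^*_{{\rm B},D}(Q,Q)\). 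The map \(Q\mapsto P\) is a linear isomorphism of \(\operatorname{Sym}(2N)\), so Sylvester's law of inertia transfers the inertia and nullity from \(D\) to \(\Sigma\). This is the step I expect to need the most care: the congruence must be set up so that it fixes \(\Om\), and the symplectic identities must be used with the correct transposes. The congruence is not orthogonal in the trace pairing, which is exactly why only inertia and nullity, and not eigenvalues, carry over.

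The nullity count then follows from the constraint \(x_k\geq1\). First, \(x_kx_l+1\geq2\) always, which gives \(\lambda_{\rm shape},\lambda_+\geq4\hbar^2\). Next, \(x_k^2-1=0\) if and only if \(k\in\mathcal P\). Finally, \(x_kx_l-1=0\) with both \(x_k,x_l\geq1\) forces \(x_k=x_l=1\). The kernel therefore has dimension \(m+2\binom m2=m^2\), and every other weight is strictly positive. The kernel is trivial exactly when \(m=0\), which is the faithful interior.
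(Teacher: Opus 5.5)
Your proposal is correct and follows essentially the same route as the paper's proof. Both arguments use the congruence $Q=S^{\mathsf T}PS$ together with $S^{-1}\Om S^{-\mathsf T}=\Om$ to reduce to $D$, the blockwise action $2\hbar^2(x_kx_lP_{kl}+JP_{kl}J)$ with the $\mp1$ eigenspaces $\operatorname{span}\{\Id_2,J\}$ and $\operatorname{span}\{\sigma_x,\sigma_z\}$ of $A\mapsto JAJ$, and the count $m+2\binom m2=m^2$ from $x_k\geq1$. One cosmetic point: the map $A\mapsto JAJ$ is itself an involution, since $J^2=-\Id_2$ gives $J(JAJ)J=A$, so your qualifier ``up to sign'' is unnecessary.
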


\begin{proof}
For a covector \(P\) at \(\Sigma=SDS^{\mathsf T}\), define its
Williamson-frame representative by
\begin{equation}
    \widetilde P=S^{\mathsf T}PS.
\end{equation}
Using \(S^{-1}\Om S^{-\mathsf T}=\Om\) and cyclicity of the trace gives
\begin{equation}
    g_{{\rm B},\Sigma}^*(P,P)
    =g_{{\rm B},D}^*(\widetilde P,\widetilde P).
\end{equation}
The two forms are congruent, so it is sufficient to calculate their inertia
and nullity at \(D\).

Write
\begin{equation}
    J=\begin{pmatrix}0&1\\-1&0\end{pmatrix},
    \qquad
    X=\begin{pmatrix}0&1\\1&0\end{pmatrix},
    \qquad
    Z=\begin{pmatrix}1&0\\0&-1\end{pmatrix}.
\end{equation}
These matrices satisfy
\begin{equation}
    J\Id_2J=-\Id_2,
    \qquad
    JXJ=X,
    \qquad
    JZJ=Z.
\end{equation}
For a diagonal block \(P_{kk}\in\operatorname{Sym}(2)\), the scalar direction
therefore has weight
\begin{equation}
    8\nu_k^2-2\hbar^2=2\hbar^2(x_k^2-1),
\end{equation}
whereas the two traceless directions have weight
\begin{equation}
    8\nu_k^2+2\hbar^2=2\hbar^2(x_k^2+1).
\end{equation}

For \(k<l\), write \(P_{kl}=A\) and \(P_{lk}=A^{\mathsf T}\), with \(A\) an
arbitrary real \(2\times2\) matrix.  The corresponding block is
\begin{equation}
    \bigl(\mathcal G_{{\rm B},D}(P)\bigr)_{kl}
    =8\nu_k\nu_lA+2\hbar^2JAJ.
\end{equation}
The involution \(A\mapsto JAJ\) has negative eigenspace
\(\operatorname{span}\{\Id_2,J\}\) and positive eigenspace
\(\operatorname{span}\{X,Z\}\), each of dimension two.  This gives the two
cross-mode weights stated above.

Because \(x_k\geq1\), a local scalar weight vanishes exactly when \(x_k=1\).
A negative-parity cross-mode weight vanishes exactly when \(x_kx_l=1\), which
on the admissible domain is equivalent to \(x_k=x_l=1\).  No positive-parity
weight can vanish.  If \(m\) modes are pure, the kernel consequently has
dimension
\begin{equation}
    m+2\binom{m}{2}=m^2.
\end{equation}
\end{proof}

\begin{remark}[Fully pure consistency check]
The classwise weights and multiplicities, the nullity \(m^2\) for
\(m=1,2,3\) floor modes, its invariance under a random symplectic frame, and
the identification of the passive and active cross-mode sectors with
beam-splitter and two-mode-squeezing tangents are confirmed numerically in
Ref.~\cite{VerifyScript} (checks~C3, C8, C9).
If all \(N\) modes are pure, Theorem~\ref{thm:multimode-spectrum} gives
nullity \(N^2\) and rank
\begin{equation}
    N(2N+1)-N^2=N(N+1),
\end{equation}
which equals the dimension of the pure Gaussian covariance manifold
\(\operatorname{Sp}(2N,\mathbb R)/U(N)\) \cite{Arvind1995}.
\end{remark}

\begin{remark}[Vacuum-stabilizer structure of the kernel]
\label{rem:kernel-um}
The nullity \(m^2\) has a structural interpretation.  Restrict to the
\(2m\times2m\) block of modes at the Williamson floor, where
\(\Sigma=(\hbar/2)\Id_{2m}\), and denote its symplectic form by \(\Om_m\).
For symmetric covectors supported on this block, the cometric operator becomes
\begin{equation*}
    \mathcal G_{\rm B}(P)
    =2\hbar^2\bigl(P+\Om_mP\Om_m\bigr).
\end{equation*}
Hence \(P\) lies in the kernel exactly when
\(\Om_mP\Om_m=-P\), equivalently \([P,\Om_m]=0\).  The symmetric commutant
has dimension \(m^2\), while the symmetric anticommutant has dimension
\(m(m+1)\).

The map \(P\mapsto\Om_mP\) canonically identifies the commutant with the
vacuum-stabilizer algebra
\begin{equation*}
    \mathfrak u(m)
    =\mathfrak{sp}(2m,\mathbb R)\cap\mathfrak{so}(2m),
\end{equation*}
whereas the anticommutant corresponds to the noncompact tangent sector of
\(\operatorname{Sp}(2m,\mathbb R)/U(m)\).  Within the pure block these two
spaces are trace-orthogonal, so the kernel is the cotangent annihilator of the
pure-orbit tangent space.  Thus
\(\ker g_{\rm B}^*\) is canonically isomorphic to \(\mathfrak u(m)\), while
the complementary \(m(m+1)\)-dimensional sector carries the residual
pure-Gaussian geometry.
\end{remark}

A symplectic-algebraic decomposition of the Gaussian quantum Fisher
information has recently been introduced by Chatterjee \emph{et
al.}~\cite{Chatterjee2026}, who split the QFI additively into an even part
carrying changes of the symplectic spectrum and an odd part associated with
frame-deforming dynamics, using the Cartan decomposition of
\(\mathfrak{sp}(2N,\mathbb R)\) and parity with respect to \(\Om\).  The
classification of Theorem~\ref{thm:multimode-spectrum} is contravariant rather
than covariant and is directed at a different question: it resolves the
cometric into classwise weights with multiplicities in the Williamson-adapted
cotangent frame, and determines the exact nullity of each boundary stratum
when \(m\) of \(N\) modes reach the floor.  The two descriptions are
consistent on the fully pure manifold, where the spectrum-changing sector
degenerates and the residual geometry is that of
\(\operatorname{Sp}(2m,\mathbb R)/U(m)\).

\subsubsection{Auxiliary coupling and the positivity cone}

The lift construction suggests a useful diagnostic deformation: scale the
base--fibre cross coupling by an auxiliary parameter \(\lambda\).  This parameter
is not assigned thermodynamic or dynamical meaning here.  In the convention
with vacuum floor \(\nu=1/2\), define
\begin{equation}
    S_\lambda(P)
    =8\Sigma P\Sigma+2\lambda^2\Om P\Om,
    \qquad S_1=g_{\rm B}^*.
    \label{eq:lambda-family}
\end{equation}
At a one-mode Williamson representative \(\Sigma=\nu\Id_2\), its eigenvalues per
unit Frobenius norm separate into
\begin{equation}
    8\left(\nu^2-\frac{\lambda^2}{4}\right),
    \qquad
    8\left(\nu^2+\frac{\lambda^2}{4}\right),
    \label{eq:lambda-spectrum}
\end{equation}
with multiplicities one and two, respectively.

\begin{proposition}[Auxiliary positivity cone]
\label{prop:positivity-cone}
For the one-mode family \eqref{eq:lambda-family}, positivity holds when
\(\nu>\lambda/2\), the dual form is degenerate when \(\nu=\lambda/2\), and its
formal continuation becomes indefinite when \(\nu<\lambda/2\)
(Ref.~\cite{VerifyScript}, check~C4).  The physical
slice \(\lambda=1\) therefore meets the degeneracy locus exactly at the quantum
uncertainty floor \(\nu=1/2\).
\end{proposition}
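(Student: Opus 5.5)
The plan is to reduce the statement to the one-mode spectral decomposition already established in Theorem~\ref{thm:null-mobility}, with \(\hbar\) replaced by the auxiliary parameter \(\lambda\). At the Williamson representative \(\Sigma=\nu\Id_2\), the family \eqref{eq:lambda-family} reads \(S_\lambda(P)=8\nu^2P+2\lambda^2\Om P\Om\). Take the trace-orthonormal basis \(E_0=\Id_2/\sqrt2\), \(E_1=X/\sqrt2\), \(E_2=Z/\sqrt2\) of \(\operatorname{Sym}(2)\). By the identities \(J\Id_2J=-\Id_2\), \(JXJ=X\), \(JZJ=Z\) from the proof of Theorem~\ref{thm:multimode-spectrum}, each \(E_\alpha\) is an eigenvector of \(S_\lambda\).

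First I would verify that this basis diagonalizes \(S_\lambda\) with respect to the trace pairing.

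\begin{itemize}
\item On \(E_0\) the operator acts by \(8\nu^2-2\lambda^2=8(\nu^2-\lambda^2/4)\).
\item On \(E_1\) and \(E_2\) it acts by \(8(\nu^2+\lambda^2/4)\).
\item Cross terms vanish because the scalar and traceless sectors are trace-orthogonal and each is invariant under \(P\mapsto\Om P\Om\).
\end{itemize}

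This reproduces \eqref{eq:lambda-spectrum} with multiplicities one and two. The quadratic form \(P\mapsto\Tr(P\,S_\lambda(P))\) is then \(8(\nu^2-\lambda^2/4)p_0^2+8(\nu^2+\lambda^2/4)(p_1^2+p_2^2)\) in these coordinates.

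The sign analysis is then immediate. The shape eigenvalue is strictly positive for every \(\nu>0\) and real \(\lambda\). The definiteness of the form is therefore governed by the single radial eigenvalue, whose sign is that of \(\nu^2-\lambda^2/4\). Taking \(\nu>0\) and \(\lambda\geq0\) (the form depends only on \(\lambda^2\)), this gives three cases.

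\begin{itemize}
\item If \(\nu>\lambda/2\), all eigenvalues are positive, so the form is positive definite.
\item If \(\nu=\lambda/2\), the form has a one-dimensional kernel spanned by \(\Id_2\), so it is degenerate but still positive semidefinite.
\item If \(\nu<\lambda/2\), the form has signature \((2,1)\) and is indefinite.
\end{itemize}

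Substituting \(\lambda=1\) places the degeneracy at \(\nu=1/2\), which is the Williamson floor for \(\hbar=1\). This matches Theorem~\ref{thm:null-mobility} via \(S_1=g_{\rm B}^*\).

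The main subtlety is the phrase ``formal continuation'': for \(\nu<\lambda/2\), the matrix \(\nu\Id_2\) is not a quantum-admissible covariance when \(\lambda=1\). I would state explicitly that the formula \eqref{eq:lambda-family} is simply being evaluated on the full classical cone, where it remains a well-defined quadratic form. I would also note that the statement concerns the Williamson representative. For a general one-mode \(\Sigma=SDS^{\mathsf T}\), the congruence \(P\mapsto S^{\mathsf T}PS\) from the proof of Theorem~\ref{thm:multimode-spectrum} transports the inertia unchanged. That argument uses only \(S\Om S^{\mathsf T}=\Om\), so it is independent of \(\lambda\), and the sign classification therefore holds throughout.
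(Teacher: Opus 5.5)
Your proposal is correct and follows the paper's route. The paper reads the three regimes off the one-mode spectrum \eqref{eq:lambda-spectrum}, obtained exactly as in Theorem~\ref{thm:null-mobility} with \(\hbar\) replaced by \(\lambda\), where only the radial eigenvalue \(8(\nu^2-\lambda^2/4)\) can change sign. Your congruence remark, which transports the inertia classification from \(\nu\Id_2\) to a general one-mode \(\Sigma=SDS^{\mathsf T}\) and remains valid on the classical cone since Williamson's theorem needs only \(\Sigma>0\), is a correct addition that the paper leaves implicit.
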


\section{Radial dilation and its Bures potential}
\label{sec:potential}

The entropy of one Gaussian normal mode is
\begin{equation}
    s(\nu)
    =\left(\nu+\frac12\right)\log\left(\nu+\frac12\right)
    -\left(\nu-\frac12\right)\log\left(\nu-\frac12\right),
    \label{eq:entropy}
\end{equation}
with
\begin{equation}
    s'(\nu)=\log\frac{\nu+\frac12}{\nu-\frac12}.
\end{equation}
For independent normal modes, \(S=\sum_k s(\nu_k)\).

\begin{proposition}[Exact entropy-scaling identity]
\label{prop:entropy-scaling}
On the fixed-Williamson-frame radial manifold \(\M_{S_W}\),
\begin{equation}
    V(S)=S+\Phi,
    \qquad
    \Phi(\bm\nu)
    =-\frac12\sum_{k=1}^{N}\log\left(\nu_k^2-\frac14\right).
    \label{eq:entropy-scaling}
\end{equation}
\end{proposition}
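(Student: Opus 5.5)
The plan is a direct one-mode computation followed by summation over modes. On the fixed-frame radial manifold \(\M_{S_W}\) the coordinates are the symplectic eigenvalues \(\bm\nu\). The Euler field restricts to \(V=\sum_k\nu_k\partial_{\nu_k}\) by Eq.~\eqref{eq:euler}, and the entropy is additive, \(S=\sum_k s(\nu_k)\). Since \(V\) acts mode by mode, \(V(S)=\sum_k\nu_k s'(\nu_k)\). The identity therefore reduces to the single-mode statement
\begin{equation*}
    \nu\,s'(\nu)=s(\nu)-\tfrac12\log\!\left(\nu^2-\tfrac14\right),
    \qquad \nu>\tfrac12 .
\end{equation*}
Summing this over \(k\) reproduces \(S+\Phi\) with \(\Phi\) as defined in Eq.~\eqref{eq:entropy-scaling}.

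To verify the one-mode identity, I would write \(a=\nu+\tfrac12\) and \(b=\nu-\tfrac12\), so that \(s=a\log a-b\log b\) and \(s'=\log a-\log b\). Then \(\nu s'=\nu\log a-\nu\log b\), while
\[
s=(\nu+\tfrac12)\log a-(\nu-\tfrac12)\log b .
\]
Subtracting gives
\[
\nu s'-s=-\tfrac12\log a-\tfrac12\log b=-\tfrac12\log(ab),
\]
and \(ab=\nu^2-\tfrac14\). This is exactly the claimed relation. The derivative \(s'\) quoted after Eq.~\eqref{eq:entropy} can be used directly; the only ingredient is that the \(\pm1\) terms from differentiating \(a\log a\) and \(b\log b\) cancel.

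I do not expect a real obstacle in the algebra. The one point to handle carefully is the justification that \(V(S)\) is computed coordinatewise. This needs two facts: that on \(\M_{S_W}\) the covariance dilation \(\Sigma\mapsto e^s\Sigma\) moves only the \(\nu_k\) and leaves the frame \(S_W\) fixed, and that the symplectic eigenvalues of \(e^s\Sigma\) are \(e^s\nu_k\), as in the proof of Proposition~\ref{prop:finite-contact}. I would also restrict explicitly to the faithful interior \(\nu_k>\tfrac12\), where \(s\) is smooth and \(\Phi\) is finite, and remark that \(\Phi\to+\infty\) at the Williamson floor.
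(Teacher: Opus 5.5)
Your proposal is correct and follows the same route as the paper's proof: reduce to the one-mode identity \(\nu s'(\nu)-s(\nu)=-\tfrac12\log(\nu^2-\tfrac14)\) by collecting the coefficients of \(\log(\nu\pm\tfrac12)\), then sum over modes. Your added remark is also right: \(e^s S_W D S_W^{\mathsf T}=S_W(e^sD)S_W^{\mathsf T}\) keeps the dilation inside \(\M_{S_W}\), so \(V=\sum_k\nu_k\partial_{\nu_k}\) acts coordinatewise. The paper takes this for granted via Eq.~\eqref{eq:euler}.
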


\begin{proof}
For one mode, direct collection of the coefficients multiplying
\(\log(\nu\pm1/2)\) gives
\begin{equation}
    \nu s'(\nu)-s(\nu)
    =-\frac12\log\left(\nu^2-\frac14\right).
\end{equation}
Summing over modes proves Eq.~\eqref{eq:entropy-scaling}.
\end{proof}

The identity is geometric and algebraic.  A thermodynamic interpretation
requires a specified thermal family.  For a fixed quadratic Hamiltonian with
normal-mode frequencies \(\omega_k\),
\begin{equation}
    \nu_k=\frac12\coth\frac{\beta\omega_k}{2},
    \qquad
    \log\frac{\nu_k+\frac12}{\nu_k-\frac12}=\beta\omega_k.
\end{equation}
Including the zero-point energy,
\begin{equation}
    Z=\prod_k\frac{1}{2\sinh(\beta\omega_k/2)},
\end{equation}
and hence
\begin{equation}
    \Phi
    =\sum_k\log\left[2\sinh\left(\frac{\beta\omega_k}{2}\right)\right]
    =-\log Z=\beta F.
    \label{eq:thermal-phi}
\end{equation}
This interpretation is restricted to fixed-Hamiltonian thermal families; it is
not a claim that \(\Phi\) is a thermodynamic free energy on the entire Gaussian
manifold.

Differentiating Eq.~\eqref{eq:entropy-scaling}, with
\(\mu_k=s'(\nu_k)\), gives
\begin{equation}
    \sum_k\nu_k\,d\mu_k=d\Phi.
    \label{eq:gibbs-duhem-like}
\end{equation}
This is a covariance-scaling analogue of a Gibbs--Duhem relation, distinct from the macroscopic extensive thermodynamic identities that typically generate the Weinhold and Ruppeiner fluctuation geometries \cite{Weinhold1975,Ruppeiner1995}.

\begin{theorem}[Bures-gradient representation]
\label{thm:bures-potential}
On the one-mode radial family at fixed \((r,\varphi)\),
\begin{equation}
    V=-\frac14\grad_{\rm B}\Phi,
    \qquad
    \lVert V\rVert_{\rm B}^2=-\frac14V(\Phi).
    \label{eq:gradient-representation}
\end{equation}
The identities extend additively to \(\M_{S_W}\) whenever the radial QFI is a
direct sum of independent one-mode blocks.
\end{theorem}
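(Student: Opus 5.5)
The plan is to work entirely in the coordinate \(\nu\) on the one-mode radial family, at fixed \((r,\varphi)\), using the radial Bures coefficient from Eq.~\eqref{eq:radial-components}. Proposition~\ref{prop:single-mode-qfi} states that the QFI is diagonal in \((\nu,r,\varphi)\). The restriction of the Bures metric to the radial line is therefore the single component \(g^{\rm B}_{\nu\nu}=1/[4(\nu^2-\tfrac14)]\). The Bures gradient of any function of \(\nu\) alone then has only a \(\partial_\nu\) component, equal to \(g_{\rm B}^{\nu\nu}\partial_\nu\). I would also note that this is the gradient within the full one-mode manifold, because diagonality means \(g_{\rm B}^{\nu r}=g_{\rm B}^{\nu\varphi}=0\) and \(\Phi\) does not depend on \(r\) or \(\varphi\).

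The key steps, in order, are as follows.

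\emph{Compute \(\partial_\nu\Phi\).} From \(\Phi=-\tfrac12\log(\nu^2-\tfrac14)\) one gets \(\partial_\nu\Phi=-\nu/(\nu^2-\tfrac14)\).

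\emph{Compute the gradient.} Multiplying by \(g_{\rm B}^{\nu\nu}=4(\nu^2-\tfrac14)\) gives \(\grad_{\rm B}\Phi=-4\nu\,\partial_\nu\). Comparing with \(V=\nu\partial_\nu\) from Eq.~\eqref{eq:euler} gives \(V=-\tfrac14\grad_{\rm B}\Phi\).

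\emph{Compute the norm.} Directly,
\begin{equation*}
    \lVert V\rVert_{\rm B}^2=g^{\rm B}_{\nu\nu}\,\nu^2=\frac{\nu^2}{4(\nu^2-\frac14)}.
\end{equation*}
This equals \(-\tfrac14\nu\,\partial_\nu\Phi=-\tfrac14V(\Phi)\). Equivalently, it follows abstractly from \(\lVert V\rVert^2=g(V,-\tfrac14\grad\Phi)=-\tfrac14\,d\Phi(V)\).

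\emph{Multimode extension.} I would assume, as the statement does, that the radial QFI on \(\M_{S_W}\) is the direct sum of the one-mode blocks \(1/(\nu_k^2-\tfrac14)\). The metric is then a product metric in the coordinates \(\nu_1,\dots,\nu_N\). Both \(\Phi=\sum_k\Phi_k(\nu_k)\) and \(V=\sum_k\nu_k\partial_{\nu_k}\) are sums of single-mode pieces. Hence the gradient and the norm split mode by mode, and the one-mode identities add.

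The only step requiring care is the multimode one. One must make sure that the gradient is taken with respect to the induced metric on the fixed-frame submanifold \(\M_{S_W}\), not the ambient covariance manifold. With a fixed frame \(S_W\), the Williamson-frame congruence of Theorem~\ref{thm:multimode-spectrum} shows that scalar-sector directions of different modes are trace-orthogonal. This block structure is exactly what the hypothesis of independent one-mode blocks encodes. I would therefore invoke the hypothesis rather than attempt to derive it, and I expect no further obstacle.
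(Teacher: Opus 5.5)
Your proposal is correct and follows the paper's own proof essentially step for step. You compute \(\partial_\nu\Phi=-\nu/(\nu^2-\tfrac14)\), raise the index with \(g_{\rm B}^{\nu\nu}=4(\nu^2-\tfrac14)\) to obtain \(\grad_{\rm B}\Phi=-4\nu\,\partial_\nu\) (shape components vanish by diagonality), evaluate \(\lVert V\rVert_{\rm B}^2=g^{\rm B}_{\nu\nu}\nu^2\), and extend by modewise addition. Your added remark that the Williamson-frame congruence makes the fixed-frame radial metric block diagonal is correct; the paper leaves this point implicit.
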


\begin{proof}
For one mode,
\begin{equation}
    \partial_\nu\Phi=-\frac{\nu}{\nu^2-\frac14}.
\end{equation}
Using Eq.~\eqref{eq:radial-components},
\begin{equation}
    (\grad_{\rm B}\Phi)^\nu
    =4\left(\nu^2-\frac14\right)
    \left(-\frac{\nu}{\nu^2-\frac14}\right)
    =-4\nu,
\end{equation}
and the shape components vanish.  Thus \(V=-\tfrac14\grad_{\rm B}\Phi\).  Also,
\begin{equation}
    \lVert V\rVert_{\rm B}^2
    =g^{\rm B}_{\nu\nu}\nu^2
    =\frac{\nu^2}{4(\nu^2-\frac14)}
    =-\frac14V(\Phi).
\end{equation}
The fixed-frame multimode statement follows by modewise addition.
\end{proof}

The potential diverges at the boundary,
\begin{equation}
    \Phi(\nu)\longrightarrow+\infty
    \qquad(\nu\downarrow\tfrac12),
\end{equation}
although Proposition~\ref{prop:finite-distance} shows that the boundary is at
finite metric distance.  The potential difference can be interpreted as a
required work only after a force law or control protocol is specified.

It is useful to isolate the floor-dependent scaling response:
\begin{equation}
    V(\Phi)
    =-\sum_k\frac{\nu_k^2}{\nu_k^2-\frac14}
    =-N-\frac14\sum_k\frac{1}{\nu_k^2-\frac14}.
    \label{eq:phi-scaling}
\end{equation}
The second term diverges at the lower floor and vanishes at high symplectic
eigenvalue, anticipating the two regimes studied below.

\section{The intrinsic high-noise classical limit}
\label{sec:upper}

The lower boundary is a singular limit.  The opposite classical regime is
regular.  Restore \(\hbar\) and consider a family whose relevant symplectic
eigenvalues satisfy \(\nu_k\gg\hbar\).  For a fixed covector \(P\), the two terms
in Eq.~\eqref{eq:dual-schematic} scale as
\begin{equation}
    \Tr(P\Sigma P\Sigma)=O(\nu^2\lVert P\rVert_F^2),
    \qquad
    \hbar^2\Tr(P\Om P\Om)=O(\hbar^2\lVert P\rVert_F^2).
\end{equation}

\begin{proposition}[Suppression of the symplectic correction]
\label{prop:upper-limit}
On any spectral sector in which the Fisher-type quadratic form is uniformly
nonzero, the relative symplectic correction to the Gaussian Bures cometric
obeys
\begin{equation}
    \frac{|Q_\Om(P,P)|}
    {g_{\rm FR}^*(\Sigma;P,P)}
    =O\!\left(\frac{\hbar^2}{\nu_{\min}^2}\right)
    \longrightarrow0
    \qquad
    \left(\frac{\nu_{\min}}{\hbar}\to\infty\right).
    \label{eq:upper-suppression}
\end{equation}
Thus the quantum Bures cometric approaches its covariance Fisher component to
leading order.
\end{proposition}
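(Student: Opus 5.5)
We make the decomposition of Eq.~\eqref{eq:dual-schematic} explicit, writing
\begin{equation*}
    g_{\rm FR}^*(\Sigma;P,P)=8\Tr(P\Sigma P\Sigma),
    \qquad
    Q_\Om(P,P)=2\hbar^2\Tr(P\Om P\Om).
\end{equation*}
We then bound numerator and denominator separately in terms of a common Frobenius norm. The natural norm is that of the Williamson-frame representative \(\widetilde P=S^{\mathsf T}PS\) used in the proof of Theorem~\ref{thm:multimode-spectrum}. By that congruence, both terms transform covariantly:
\begin{equation*}
    \Tr(P\Sigma P\Sigma)=\Tr(\widetilde PD\widetilde PD),
    \qquad
    \Tr(P\Om P\Om)=\Tr(\widetilde P\Om\widetilde P\Om).
\end{equation*}
The latter identity uses \(S\Om S^{\mathsf T}=\Om\) and hence \(S^{-1}\Om S^{-\mathsf T}=\Om\). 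It therefore suffices to prove the estimate at the Williamson representative \(D\).

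\textbf{Step 1: the Fisher term at \(D\).} We expand \(\widetilde P\) in \(2\times2\) blocks \(\widetilde P_{kl}\), which gives
\begin{equation*}
    \Tr(\widetilde PD\widetilde PD)=\sum_{k,l}\nu_k\nu_l\lVert\widetilde P_{kl}\rVert_F^2\geq\nu_{\min}^2\lVert\widetilde P\rVert_F^2.
\end{equation*}
Here the Fisher form is positive definite rather than merely nonzero, which is exactly the ``uniformly nonzero'' hypothesis made quantitative.

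\textbf{Step 2: the symplectic term at \(D\).} Since \(\Om\) is orthogonal, Cauchy--Schwarz gives
\begin{equation*}
    |\Tr(\widetilde P\Om\widetilde P\Om)|\leq\lVert\widetilde P\rVert_F\,\lVert\Om\widetilde P\Om\rVert_F=\lVert\widetilde P\rVert_F^2.
\end{equation*}

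\textbf{Step 3: the ratio.} Combining the two bounds yields
\begin{equation*}
    \frac{|Q_\Om(P,P)|}{g_{\rm FR}^*(\Sigma;P,P)}\leq\frac{2\hbar^2}{8\nu_{\min}^2}=\frac{\hbar^2}{4\nu_{\min}^2}.
\end{equation*}
This bound is uniform in \(P\) and in the symplectic frame \(S\), and it gives the claimed \(O(\hbar^2/\nu_{\min}^2)\) decay.

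\textbf{Consistency check.} The bound is sharp. On the local scalar and traceless sectors of Theorem~\ref{thm:multimode-spectrum}, the weights \(2\hbar^2(x_k^2\mp1)\) have relative correction exactly \(1/x_k^2=\hbar^2/(4\nu_k^2)\). On the cross-mode sectors, the relative correction is \(1/(x_kx_l)\).

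\textbf{The ``leading order'' claim.} From the ratio bound we obtain
\begin{equation*}
    (1-\epsilon)\,g_{\rm FR}^*\leq g_{\rm B}^*\leq(1+\epsilon)\,g_{\rm FR}^*,
    \qquad
    \epsilon=\frac{\hbar^2}{4\nu_{\min}^2},
\end{equation*}
as quadratic forms. Inverting gives the same two-sided comparison for the primal metrics, with factors \((1\pm\epsilon)^{-1}\).

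\textbf{Main obstacle.} The naive laboratory-frame estimate \(\Tr(P\Sigma P\Sigma)\geq\lambda_{\min}(\Sigma)^2\lVert P\rVert_F^2\) fails to give a bound in terms of \(\nu_{\min}\). Strong squeezing makes \(\lambda_{\min}(\Sigma)\) much smaller than \(\nu_{\min}\), while the \(\Om\) term is not small in the same norm. The key move is to avoid bounding in the laboratory frame at all. Both forms are congruence-covariant under the same map \(P\mapsto S^{\mathsf T}PS\), so the ratio is frame-invariant, and the estimate only needs to be made at \(D\). That is why the result holds uniformly in squeezing.
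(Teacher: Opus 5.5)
Your proof is correct, and it takes a more careful route than the paper's. The paper argues by scaling alone. The symplectic term is independent of $\Sigma$ and carries $\hbar^2$, while the Fisher-type term is quadratic in $\Sigma$, so on a sector bounded away from the Fisher kernel the ratio is at most a constant times $\hbar^2/\nu_{\min}^2$. The quantitative content therefore sits in the hypothesis. The argument also leaves implicit both the constant and why the lower bound on $\Tr(P\Sigma P\Sigma)$ should scale with $\nu_{\min}^2$ rather than with $\lambda_{\min}(\Sigma)^2$, which is exactly the squeezing issue you flag.

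You instead reuse the Williamson congruence $P\mapsto S^{\mathsf T}PS$ from the proof of Theorem~\ref{thm:multimode-spectrum}, under which both forms are covariant. You then prove the two block estimates directly at $D$. This gains three things:
\begin{itemize}
\item The sector restriction becomes unnecessary, because the bound holds on all of $\operatorname{Sym}(2N)$.
\item The constant $\hbar^2/(4\nu_{\min}^2)$ is explicit and sharp. It is attained on the local scalar and traceless sectors of the mode with $\nu_k=\nu_{\min}$, and it is uniform in squeezing, frame, and $P$.
\item The two-sided comparison with factors $1\mp\epsilon$ makes ``leading order'' precise for the cometric. Via inversion it does the same for the primal metric; inversion is legitimate because $\epsilon<1$ on the faithful interior. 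This extends the one-mode isotropic relative convergence of Remark~\ref{rem:tensor-versus-selection} to arbitrary multimode, squeezed states.
\end{itemize}
The paper's version is shorter but is essentially a scaling argument whose key estimate is assumed. Yours is a complete proof of a stronger statement.
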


\begin{proof}
The symplectic term is independent of the magnitude of \(\Sigma\) and carries
the factor \(\hbar^2\).  The Fisher-type term is quadratic in \(\Sigma\).  On a
sector bounded away from the kernel of the latter, their ratio is bounded by a
constant times \(\hbar^2/\nu_{\min}^2\).
\end{proof}

The same suppression appears in Eq.~\eqref{eq:phi-scaling}: the floor correction
is \(O(\nu_k^{-2})\).  On a thermal oscillator,
\(\nu\sim(\beta\omega)^{-1}\) at high temperature, so the correction is
\(O((\beta\hbar\omega)^2)\).

There is therefore no finite upper boundary supplied by the Bures metric alone.
The upper regime is asymptotic unless an independent energy, entropy, or
coupling scale is added.  Thermodynamic branch termination in an enlarged lift
is a separate problem and is deliberately excluded here.

\section{Auxiliary comparison with Fisher--Rao and Bures--Wasserstein geometry}
\label{sec:triple-volume}

The preceding sections establish the intrinsic global geometry of the Gaussian
Bures manifold.  This section adds a deliberately separate one-mode comparison
with the classical Fisher--Rao and Bures--Wasserstein cometrics.  None of the
boundary, nullity, radial-potential, or high-noise results depends on the
selection rule introduced below.

Between the singular lower edge and the asymptotic high-noise regime, the three
dual quadratic forms admit a distinguished interior point at which their
determinant densities coincide.  Throughout this section we use the
floor-normalized symplectic eigenvalue
\begin{equation}
    x=\frac{\nu}{\hbar/2}=2\nu\quad(\hbar=1),
    \label{eq:tv-coordinate}
\end{equation}
so that the Williamson floor is \(x=1\) and the relative symplectic correction
of Section~\ref{sec:upper} is \(1/(4\nu^2)=1/x^2\).

\subsection{The three normalized dual determinant densities}

The three determinant densities, the golden-ratio equality, and the quoted
logarithmic slopes below are verified in Ref.~\cite{VerifyScript} (checks~C5,
C6).  All determinants in this section are computed relative to the trace
pairing
\begin{equation}
    \langle P,Q\rangle_{\rm tr}=\Tr(PQ)
\end{equation}
and a basis of \(\operatorname{Sym}(2)\) orthonormal with respect to this
pairing.  This fixes the common reference measure on the three-dimensional
cotangent space.

For the positive covariance cone, let \(\mathcal L_\Sigma(X)\) be the symmetric
solution of
\begin{equation}
    \Sigma\mathcal L_\Sigma(X)+\mathcal L_\Sigma(X)\Sigma=X.
\end{equation}
The Bures--Wasserstein metric and its trace-pairing dual are
\begin{align}
    g_{{\rm BW},\Sigma}(X,Y)
    &=\frac12\Tr\!\left[\mathcal L_\Sigma(X)Y\right],\\
    g_{{\rm BW},\Sigma}^*(P,P)
    &=4\Tr(P\Sigma P),
    \label{eq:bw-dual-exact}
\end{align}
respectively \cite{Malago2018,Bhatia2019,KerskensBWHamiltonian}; the dual form~\eqref{eq:bw-dual-exact} is confirmed independently in Ref.~\cite{VerifyScript} (check~C2).  The same cometric
generates the kinetic term
\begin{equation*}
    \frac12g_{\rm BW}^*(P,P)=2\Tr(P\Sigma P)
\end{equation*}
in the Hamiltonian lift of Bures--Wasserstein covariance dynamics developed in
Ref.~\cite{KerskensBWHamiltonian}.  The factor \(\ell_{\rm BW}\) introduced
below does not alter this tensor structure; it matches the transport action
scale to the dimensionless quantum-information convention used in the present
comparison.  Since this cometric is linear in covariance,
comparison with the quadratic information and quantum Bures cometrics requires
an action scale \(\ell_{\rm BW}>0\).  Define
\begin{equation}
    \widetilde g_{{\rm BW},\ell_{\rm BW}}^*
    =\ell_{\rm BW}g_{\rm BW}^*,
    \qquad
    \eta=\frac{\ell_{\rm BW}}{\hbar}.
    \label{eq:bw-action-matching}
\end{equation}

At an isotropic one-mode state \(\Sigma=\nu\Id_2\),
\begin{equation}
    \widetilde g_{{\rm BW},\ell_{\rm BW}}^*(P,P)
    =4\ell_{\rm BW}\nu\Tr(P^2).
\end{equation}
Its matrix in a trace-orthonormal basis is therefore
\(4\ell_{\rm BW}\nu\Id_3\), and
\begin{align}
    \det\widetilde g_{{\rm BW},\ell_{\rm BW}}^*
    &=(4\ell_{\rm BW}\nu)^3\\
    &=8\hbar^6\eta^3x^3.
    \label{eq:bw-determinant}
\end{align}

The lift-normalized Fisher--Rao information cometric is
\begin{equation}
    \mathcal B_\Sigma(P,P)=8\Tr(P\Sigma P\Sigma).
\end{equation}
At \(\Sigma=\nu\Id_2\), each of its three trace-orthonormal eigenvalues is
\(8\nu^2=2\hbar^2x^2\), and hence
\begin{equation}
    \det\mathcal B_\Sigma=8\hbar^6x^6.
    \label{eq:fr-determinant}
\end{equation}
By Theorem~\ref{thm:null-mobility}, the quantum Bures cometric has one radial
eigenvalue \(2\hbar^2(x^2-1)\) and two shape eigenvalues
\(2\hbar^2(x^2+1)\).  Therefore
\begin{equation}
    \det g_{\rm B}^*
    =8\hbar^6(x^2-1)(x^2+1)^2.
    \label{eq:bures-determinant}
\end{equation}

Removing the common determinant factor \(8\hbar^6\) gives the normalized
dimensionless determinant densities
\begin{equation}
    \Delta_{\rm B}=(x^2-1)(x^2+1)^2,
    \qquad
    \Delta_{\rm FR}=x^6,
    \qquad
    \Delta_{\rm BW}=\eta^3x^3.
    \label{eq:tv-densities}
\end{equation}
The parameter \(\eta\) is a constant matching parameter, not a state variable.
Its normalization status, including the optional boundary-matching convention
\(\eta=1\), is discussed in Section~\ref{sec:matching-scale}.

The equality \(\Delta_{\rm B}=\Delta_{\rm FR}\) reads, with \(u=x^2\),
\begin{equation}
    (u-1)(u+1)^2=u^3
    \quad\Longleftrightarrow\quad
    u^2=u+1,
    \label{eq:tv-golden}
\end{equation}
whose positive root is the golden ratio
\(u=\varphi=(1+\sqrt5)/2\).  Requiring the BW determinant density to have the
same value fixes the second coordinate of the junction:
\begin{equation}
    x=\eta=\sqrt\varphi.
\end{equation}

\begin{remark}[Normalization status of the junction]
\label{rem:tv-invariance}
Within the exact Bures decomposition fixed by
Eq.~\eqref{eq:dual-schematic}, the Bures--Fisher crossing
\(x=\sqrt\varphi\) is invariant under common normalization and reference-basis
changes.  A common rescaling of the Bures cometric and its Fisher block
multiplies both determinants by the same factor, and a common change of basis
multiplies them by the same squared Jacobian.  Their equality is therefore
unchanged.  Equivalently, the reduced radial determinant factor
\((x^2-1)/x^2\) exactly compensates the two enhanced shape factors
\((x^2+1)^2/x^4\) at \(x^2=\varphi\).

This invariance does not make the transport comparison intrinsic.  The BW
cometric is linear rather than quadratic in covariance and requires the
independent action-matching scale \(\eta\).  Nor does the invariant
Bures--Fisher crossing derive the minimum-determinant selection rule.  The full
triple junction is therefore correctly stated as the codimension-two point
\((x,\eta)=(\sqrt\varphi,\sqrt\varphi)\), with \(\eta\) retained as an
explicit constitutive coordinate.
\end{remark}

\subsection{Status of the action-matching scale}
\label{sec:matching-scale}

The parameter \(\eta\) is not fixed by bare covariance geometry for a
principled reason.  The quantum Bures and Fisher--Rao forms measure local
distinguishability and share the normalization inherited from
Eq.~\eqref{eq:dual-schematic}, whereas the Bures--Wasserstein form measures
transport relative to a chosen Euclidean ground cost on phase space.  A
constant rescaling of that ground cost preserves the unparameterized transport
geodesics but rescales its cometric and shifts every determinant crossing with
the information forms.  The symplectic tensor alone supplies no preferred
conversion between these two operational units.

One clean convention is boundary matching.  Requiring the isotropic
Fisher--Rao and action-matched BW cometric eigenvalues to agree at
\(\nu=\hbar/2\) gives
\begin{equation}
    4\ell_{\rm BW}\frac{\hbar}{2}
    =8\left(\frac{\hbar}{2}\right)^2,
\end{equation}
and hence \(\ell_{\rm BW}=\hbar\), or \(\eta=1\).  This makes the scaled
transport cost dimensionless in the adopted units, but it is a normalization
condition rather than a physical derivation.

A physical value of \(\eta\) would require additional constitutive structure,
such as a quadratic Hamiltonian or compatible complex structure fixing the
phase-space ground cost, a specified fluctuation--dissipation relation fixing
transport coefficients, or a common microscopic cotangent action fixing the
relative branch measures.  In a multimode setting the matching may accordingly
be mode dependent or tensorial.  Conversely, the equality at
\(\eta=\sqrt\varphi\) cannot be invoked to determine \(\eta\): choosing the
scale in order to create the triple equality would be circular.  We therefore
retain \(\eta\) as an explicit constitutive parameter.

\subsection{Status of the triple-volume point and open questions}
\label{sec:tv-status}

The triple-volume point
\begin{equation}
    (x,\eta)=(\sqrt{\varphi},\sqrt{\varphi})\approx(1.272,1.272)
    \label{eq:tv-point}
\end{equation}
is the unique kinematic junction of the three one-mode determinant densities,
while its physical and thermodynamic implications remain open.

Geometrically, this junction is the unique one-mode balance point between three
distinct assignments of infinitesimal dual covariance volume.  The Bures
determinant combines one symplectically reduced radial dual weight with two
enhanced shape dual weights; the Fisher--Rao determinant measures the
corresponding covariance-information volume without the symplectic correction;
and the action-matched Bures--Wasserstein determinant measures transport volume
through a different, linear dependence on covariance.  At
\(x=\sqrt\varphi\), the reduced radial factor and the two enhanced shape factors
in the Bures determinant balance so that its normalized determinant density
equals the Fisher--Rao density, while
\(\eta=\sqrt\varphi\) simultaneously brings the transport volume to the same
value.  Thus the point represents equality of three aggregate infinitesimal
volume scales, not equality of the metric tensors or their directional costs.

Under the conditional minimum-log-determinant rule introduced below, this
balance has a simple topological interpretation.  For
\(\eta<\sqrt\varphi\), the transport branch undercuts Fisher--Rao before the
latter can become selected, producing a direct Bures-to-BW crossover.  For
\(\eta>\sqrt\varphi\), a finite Fisher--Rao interval opens between the Bures
and BW regimes.  The triple-volume point is precisely where this intermediate
interval is born with zero width.

At this point, the three
dual determinant densities coincide exactly:
\begin{equation}
    \Delta_{\rm B}=\Delta_{\rm FR}=\Delta_{\rm BW}=\varphi^3,
    \label{eq:triple-volume-equality}
\end{equation}
relative to the common trace-pairing reference measure.

Several geometric properties of this junction are established.  All three dual
forms are positive definite and nonsingular at the intersection.  Only their
determinant densities coincide, however; the metric tensors, eigenvalues, and
directional costs remain distinct.  The crossing is transverse.  At fixed
\(\eta=\sqrt{\varphi}\), the logarithmic slopes are
\begin{equation}
    \left.\partial_x\log\Delta_{\rm B}\right|_{\rm tp}\approx6.06,
    \qquad
    \left.\partial_x\log\Delta_{\rm FR}\right|_{\rm tp}\approx4.72,
    \qquad
    \left.\partial_x\log\Delta_{\rm BW}\right|_{\rm tp}\approx2.36.
    \label{eq:tv-slopes}
\end{equation}
The three branches therefore do not meet tangentially.

Although auxiliary to the intrinsic geometric results established here, the
exact location of the junction and its role in organizing the conditional
branch topology suggest that it may provide a useful landmark for subsequent
extensions.  In particular, it motivates the study of multimode junction loci,
microscopic mechanisms that determine the matching parameter \(\eta\), and
dynamical or thermodynamic models capable of deriving a branch-selection rule.
Whether the junction retains operational significance beyond the present
covariance-level comparison remains open.

\subsection{Conditional branch topology}

\begin{definition}[Conditional minimum-log-determinant rule]
\label{def:minimum-determinant}
At fixed \((x,\eta)\), and relative to the common trace-pairing reference
measure fixed above, the conditionally selected branch is the one with the
smallest normalized determinant density, equivalently the smallest
half-log-determinant.  This is a comparison rule, not a consequence of the
covariance metrics or a derived dynamical law.
\end{definition}

\begin{remark}[Formal cotangent-fluctuation interpretation]
\label{rem:fluctuation-interpretation}
The comparison rule has an exact Gaussian-integral interpretation.  For each
positive branch cometric, define on the common trace-normalized cotangent space
\begin{equation*}
    Z_k^{(P)}(T_k,w_k)
    =w_k\int
    \exp\!\left[-\frac{g_k^*(P,P)}{2T_k}\right]dP
    =w_k(2\pi T_k)^{3/2}\bigl(\det g_k^*\bigr)^{-1/2}.
\end{equation*}
If the channel scales \(T_k\) and weights \(w_k\) are equal, maximizing this
formal cotangent weight is equivalent to minimizing \(\log\Delta_k\).  Unequal
scales or weights replace the comparison by minimization of
\begin{equation*}
    \log\Delta_k-3\log T_k-2\log w_k.
\end{equation*}
The equivalence of the three formulations---smallest determinant density, largest cotangent weight, and largest Gaussian entropy---together with the displacement of the junction under unequal \(T_k\), is verified in Ref.~\cite{VerifyScript} (check~C10).

This identity interprets the rule but does not derive a physical competition
between the branches.  In particular, operational divergences on state space have the primal metrics as their Hessians, up to the standard convention factors (verified in Ref.~\cite{VerifyScript}, check~C11, for quantum infidelity, relative entropy, and the quadratic Wasserstein cost); integrating tangent fluctuations
would instead produce \((\det g_k)^{-1/2}=(\det g_k^*)^{1/2}\), with the
opposite cometric ordering.  A microscopic realization of the present rule
must therefore supply common cotangent variables with quadratic actions
\(g_k^*(P,P)/2\), together with their relative scales and weights.  The
Hamiltonian BW lift of Ref.~\cite{KerskensBWHamiltonian} provides such a kinetic
interpretation for the BW branch alone, not for a coupled three-branch system.
\end{remark}

\begin{remark}[Status of the comparison premise]
\label{rem:comparison-premise}
Two parts of the determinant comparison are intrinsic.  The three forms act on
the same covariance cotangent space, and determinant ratios computed with a
common trace-pairing measure are basis independent.  The Bures and Fisher--Rao
normalizations are also fixed relative to one another by the exact decomposition
in Eq.~\eqref{eq:dual-schematic}; the additional transport normalization is the
explicit parameter \(\eta\).

What is not intrinsic is the assertion that the three forms occur as competing
channels in one physical ensemble with equal weights, equal fluctuation scales,
and a common large parameter.  Copy number or sample number can control
information fluctuations, while inverse noise strength or mode number can
control transport fluctuations, but no model making one parameter govern all
three is constructed here.  Likewise, covariance kinematics contains no
transition mechanism between the branches.  Along an externally driven path,
a determinant crossing is therefore a change in the conditionally preferred
cost functional, not by itself a dynamical event.
\end{remark}

\begin{proposition}[Conditional branch topology]
\label{prop:branch-topology}
Assume the conditional minimum-log-determinant rule of
Definition~\ref{def:minimum-determinant}.  Let \(x>1\) and \(\eta>0\).  Then
the conditionally selected one-mode branch diagram has two possible
topologies, separated by
\begin{equation}
    \eta_c=\sqrt{\varphi}.
\end{equation}

\begin{enumerate}
    \item \emph{Direct crossover.} If \(0<\eta<\sqrt{\varphi}\), the
    Fisher--Rao branch is never the strict global minimizer.  There is a unique
    point \(x_c(\eta)>1\) satisfying
    \begin{equation}
        \Delta_{\rm B}\bigl(x_c(\eta)\bigr)
        =\Delta_{\rm BW}\bigl(x_c(\eta);\eta\bigr),
    \end{equation}
    with
    \begin{equation}
        1<x_c(\eta)<\sqrt{\varphi}.
    \end{equation}
    The selected sequence is
    \begin{equation}
        {\rm Bures}\longrightarrow{\rm BW}.
    \end{equation}

    \item \emph{Sequential crossover.} If \(\eta>\sqrt{\varphi}\), the
    Fisher--Rao branch is the strict global minimizer on the nonempty interval
    \begin{equation}
        \sqrt{\varphi}<x<\eta.
    \end{equation}
    The selected sequence is
    \begin{equation}
        {\rm Bures}\longrightarrow{\rm Fisher\text{--}Rao}
        \longrightarrow{\rm BW},
    \end{equation}
    with crossover points
    \begin{equation}
        x_{c1}=\sqrt{\varphi},
        \qquad
        x_{c2}=\eta.
    \end{equation}
\end{enumerate}

At \(\eta=\sqrt{\varphi}\), the intermediate Fisher--Rao interval collapses
to zero width.  The three determinant densities then coincide at the isolated
junction in Eq.~\eqref{eq:tv-point}.  Thus the triple-volume point is the
codimension-two boundary between the direct and sequential conditional branch
diagrams (Fig.~\ref{fig:conditional-branches}).
\end{proposition}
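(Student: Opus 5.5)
The plan is to reduce everything to pairwise comparisons of the three explicit densities in Eq.~\eqref{eq:tv-densities}, working with $u=x^2>1$ where convenient and with log-densities where monotonicity is needed. Three pairwise facts carry the proof.

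First, Bures versus Fisher--Rao. By Eq.~\eqref{eq:tv-golden},
\begin{equation*}
    \Delta_{\rm B}-\Delta_{\rm FR}=(u-1)(u+1)^2-u^3=u^2-u-1 .
\end{equation*}
Hence $\Delta_{\rm B}<\Delta_{\rm FR}$ exactly for $1<x<\sqrt\varphi$, and the order reverses for $x>\sqrt\varphi$.

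Second, Fisher--Rao versus BW. Since $\Delta_{\rm FR}/\Delta_{\rm BW}=(x/\eta)^3$, Fisher--Rao lies below BW exactly when $x<\eta$.

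Third, Bures versus BW. Consider
\begin{equation*}
    h(x)=\log\Delta_{\rm B}-\log\Delta_{\rm BW}=\log(x^2-1)+2\log(x^2+1)-3\log x-3\log\eta .
\end{equation*}
Its derivative is
\begin{equation*}
    h'(x)=\frac{2x}{x^2-1}+\frac{4x}{x^2+1}-\frac{3}{x}.
\end{equation*}
I would show $h'>0$ on $x>1$. The first term already exceeds $2/x$, so it suffices to check that $\frac{2x}{x^2-1}+\frac{4x}{x^2+1}>\frac{3}{x}$. Clearing denominators reduces this to
\begin{equation*}
    2x^2(x^2+1)+4x^2(x^2-1)-3(x^4-1)=3x^4-2x^2+3>0,
\end{equation*}
which holds because the discriminant is negative. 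So $h$ is strictly increasing, with $h\to-\infty$ as $x\downarrow1$ and $h\to+\infty$ as $x\to\infty$ (it grows like $2\log x$). Therefore $h$ has a unique zero $x_c(\eta)$, and Bures lies below BW exactly to its left.

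To locate $x_c$ relative to $\sqrt\varphi$, evaluate $h$ at that point. Using $\Delta_{\rm B}(\sqrt\varphi)=\varphi^3$ gives $h(\sqrt\varphi)=3\log(\sqrt\varphi/\eta)$. This is positive iff $\eta<\sqrt\varphi$, so in that case $x_c<\sqrt\varphi$. In the opposite case $x_c>\sqrt\varphi$, but there the Bures--BW crossing is irrelevant to the selected diagram, because Fisher--Rao intervenes.

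With these facts the two cases follow. For $\eta<\sqrt\varphi$:
\begin{itemize}
    \item On $(1,x_c)$, Bures is below BW, and it is below Fisher--Rao because $x_c<\sqrt\varphi$.
    \item On $(x_c,\infty)$, BW is below Bures. It is also below Fisher--Rao wherever $x>\eta$, and $\eta<x_c$: indeed $h(\eta)=\log\Delta_{\rm B}(\eta)-\log\Delta_{\rm FR}(\eta)<0$ for $\eta<\sqrt\varphi$, and $x_c>1$ handles $\eta\le1$ trivially.
\end{itemize}
So Fisher--Rao is never the strict minimizer. The one subtle point is $x\in(x_c,\eta]$ when $\eta>1$, and the inequality $x_c>\eta$ rules this out; this is the step I expect to need care, and I will verify it via the sign of $h(\eta)$ exactly as indicated.

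For $\eta>\sqrt\varphi$:
\begin{itemize}
    \item On $(1,\sqrt\varphi)$, Bures is below Fisher--Rao, which is below BW (since $x<\eta$).
    \item On $(\sqrt\varphi,\eta)$, Fisher--Rao is strictly below both other branches.
    \item On $(\eta,\infty)$, BW is below Fisher--Rao, which is below Bures.
\end{itemize}
This gives the sequential order with $x_{c1}=\sqrt\varphi$ and $x_{c2}=\eta$.

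At $\eta=\sqrt\varphi$ both crossover points merge at $\sqrt\varphi$, where all three densities equal $\varphi^3$. This gives the zero-width interval and the codimension-two junction of Eq.~\eqref{eq:tv-point}.
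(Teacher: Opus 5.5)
Your proposal is correct and follows essentially the same route as the paper's proof: the same three pairwise comparisons, the same positivity of $3x^4-2x^2+3$ for monotonicity of the Bures--BW ratio, the same evaluation at $x=\sqrt\varphi$ to place $x_c(\eta)$, and the same check of $x_c(\eta)>\eta$ by reading the Bures--BW comparison at $x=\eta$ as the Bures--Fisher--Rao comparison (the paper's $\eta^4-\eta^2-1<0$). The only slips are cosmetic and do not affect the argument: $\log\Delta_{\rm B}-\log\Delta_{\rm BW}$ grows like $3\log x$, not $2\log x$, and your aside that the first term of $h'$ exceeds $2/x$ is never used.
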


\noindent The equalities, exact crossover locations, and selected sequences on
both sides of \(\eta=\sqrt{\varphi}\) are confirmed on a dense numerical grid
in Ref.~\cite{VerifyScript} (check~C7).

\begin{proof}
The Bures--Fisher equality is
\begin{equation}
    (x^2-1)(x^2+1)^2=x^6,
\end{equation}
or equivalently
\begin{equation}
    x^4-x^2-1=0.
\end{equation}
Its unique solution in the physical interior is \(x=\sqrt{\varphi}\).  Hence
\begin{equation}
    \Delta_{\rm B}<\Delta_{\rm FR}
    \quad\text{for}\quad 1<x<\sqrt{\varphi},
\end{equation}
whereas
\begin{equation}
    \Delta_{\rm FR}<\Delta_{\rm B}
    \quad\text{for}\quad x>\sqrt{\varphi}.
\end{equation}

The Fisher--BW equality is
\begin{equation}
    x^6=\eta^3x^3
    \quad\Longleftrightarrow\quad x=\eta,
\end{equation}
for \(x>0\).  Thus
\begin{equation}
    \Delta_{\rm FR}<\Delta_{\rm BW}
    \quad\Longleftrightarrow\quad x<\eta.
\end{equation}
The Fisher--Rao branch is therefore the strict global minimizer precisely when
\begin{equation}
    \sqrt{\varphi}<x<\eta,
\end{equation}
which is a nonempty interval if and only if \(\eta>\sqrt{\varphi}\).  This
identifies the intermediate branch in the sequential case.  The ordering of
the two exterior intervals follows from the Bures--BW comparison below.

For the direct Bures--BW comparison, define
\begin{equation}
    h(x)=\frac{\Delta_{\rm B}(x)}{x^3}
    =\frac{(x^2-1)(x^2+1)^2}{x^3}.
\end{equation}
The crossing equation is \(h(x)=\eta^3\).  For \(x>1\),
\begin{equation}
    \frac{d}{dx}\log h(x)
    =\frac{3x^4-2x^2+3}{x(x^4-1)}>0.
\end{equation}
Moreover,
\begin{equation}
    \lim_{x\downarrow1}h(x)=0,
    \qquad
    \lim_{x\to\infty}h(x)=\infty.
\end{equation}
The crossing consequently exists and is unique.  At
\(x=\sqrt{\varphi}\),
\begin{equation}
    h(\sqrt{\varphi})=\varphi^{3/2}
    =(\sqrt{\varphi})^3.
\end{equation}
Strict monotonicity therefore gives
\begin{equation}
    x_c(\eta)<\sqrt{\varphi}
    \quad\Longleftrightarrow\quad
    \eta<\sqrt{\varphi}.
\end{equation}
In the direct regime one also has \(x_c(\eta)>\eta\).  If
\(0<\eta\leq1\), this follows immediately from \(x_c(\eta)>1\).  If
\(1<\eta<\sqrt{\varphi}\), both arguments lie in the domain of strict
monotonicity of \(h\), and
\begin{equation}
    h(\eta)-\eta^3
    =\eta-\eta^{-1}-\eta^{-3}<0
\end{equation}
because \(\eta^4-\eta^2-1<0\).  Hence \(x_c(\eta)>\eta\) throughout the
direct regime and \(\Delta_{\rm FR}>\Delta_{\rm BW}\) for
\(x>x_c(\eta)\).  Bures is selected below \(x_c(\eta)\), BW is selected above
it, and Fisher--Rao is never the strict global minimizer.  At
\(\eta>\sqrt{\varphi}\), monotonicity gives
\(\Delta_{\rm B}<\Delta_{\rm BW}\) for \(1<x<\sqrt{\varphi}\), so Bures is
selected there; for \(x>\eta\), one has
\(\Delta_{\rm BW}<\Delta_{\rm FR}<\Delta_{\rm B}\), so BW is selected.
This completes the sequential ordering.  At \(\eta=\sqrt{\varphi}\), both
crossover locations equal
\(x=\sqrt{\varphi}\), where Eq.~\eqref{eq:triple-volume-equality} holds.
\end{proof}

\begin{remark}[Relative tensor convergence versus conditional branch selection]
\label{rem:tensor-versus-selection}
It is important to distinguish the asymptotic equivalence of two cometric
tensors from the conditional selection of a determinant branch.  Write
\begin{equation}
    g_{\rm B}^*=\mathcal B_\Sigma+Q_\Om,
\end{equation}
where
\begin{equation}
    \mathcal B_\Sigma(P,P)=8\Tr(P\Sigma P\Sigma)
\end{equation}
is the lift-normalized covariance Fisher--Rao cometric and
\begin{equation}
    Q_\Om(P,P)=2\hbar^2\Tr(P\Om P\Om)
\end{equation}
is the state-independent symplectic correction.

In the high-noise regime, on every spectral sector satisfying the uniformity
condition of Proposition~\ref{prop:upper-limit},
\begin{equation}
    \frac{|Q_\Om(P,P)|}{\mathcal B_\Sigma(P,P)}
    =O\!\left(\frac{\hbar^2}{\nu_{\min}^2}\right)
    \longrightarrow0.
\end{equation}
Equivalently, in the one-mode isotropic sector,
\begin{equation}
    \mathcal B_\Sigma^{-1/2}\mathcal G_{{\rm B},\Sigma}
    \mathcal B_\Sigma^{-1/2}
    \longrightarrow\operatorname{Id}_{\operatorname{Sym}(2)}
    \qquad(x\to\infty).
\end{equation}
Thus the quantum Bures cometric becomes asymptotically equivalent, in this
relative sense, to its covariance Fisher--Rao component.  The unscaled
cometrics do not converge in absolute operator norm: their leading terms grow
with covariance while their difference remains of order \(\hbar^2\).

Neither \(g_{\rm B}^*\) nor \(\mathcal B_\Sigma\) converges tensorially to the
Bures--Wasserstein transport cometric.  The information cometrics are quadratic
in \(\Sigma\), whereas the Bures--Wasserstein cometric is linear in \(\Sigma\)
and requires the independent action-matching parameter \(\eta\).  They
therefore retain different tensor structures and operational meanings.

For every fixed \(\eta>0\), the Bures--Wasserstein determinant density
eventually becomes the smallest of the three because
\begin{equation}
    \Delta_{\rm BW}=O(x^3),
    \qquad
    \Delta_{\rm FR}=O(x^6),
    \qquad
    \Delta_{\rm B}=O(x^6).
\end{equation}
Its selection at sufficiently large \(x\) is therefore not a metric limit.  It
is a conditional switch of the selected determinant branch under
Definition~\ref{def:minimum-determinant}.  Without a microscopic derivation of
that selection rule, this branch switch should not be identified with a
dynamical or thermodynamic phase transition.
\end{remark}

\begin{figure}[t]
\centering
\resizebox{0.82\columnwidth}{!}{%
\begin{tikzpicture}[x=4.5cm,y=1.2cm,>=Stealth,font=\footnotesize]
  \def\xt{1.27202}
  \def\top{2.35}

  \fill[blue!13]
    (1,0) -- (1.03,0.6186) -- (1.06,0.7764) -- (1.09,0.8860)
    -- (1.12,0.9727) -- (1.15,1.0458) -- (1.18,1.1098)
    -- (1.21,1.1673) -- (1.24,1.2200) -- (\xt,\xt)
    -- (\xt,\top) -- (1,\top) -- cycle;
  \fill[orange!16]
    (1,0) -- (1.03,0.6186) -- (1.06,0.7764) -- (1.09,0.8860)
    -- (1.12,0.9727) -- (1.15,1.0458) -- (1.18,1.1098)
    -- (1.21,1.1673) -- (1.24,1.2200) -- (\xt,\xt)
    -- (2.35,2.35) -- (2.35,0) -- cycle;
  \fill[green!13]
    (\xt,\xt) -- (2.35,2.35) -- (\xt,\top) -- cycle;

  \draw[very thick,blue!65!black]
    (1,0) -- (1.03,0.6186) -- (1.06,0.7764) -- (1.09,0.8860)
    -- (1.12,0.9727) -- (1.15,1.0458) -- (1.18,1.1098)
    -- (1.21,1.1673) -- (1.24,1.2200) -- (\xt,\xt);
  \draw[very thick,green!45!black] (\xt,\xt) -- (\xt,\top);
  \draw[very thick,orange!70!black] (\xt,\xt) -- (2.35,2.35);

  \draw[->] (0.96,0) -- (2.43,0) node[right] {$x=2\nu/\hbar$};
  \draw[->] (1,0) -- (1,2.47) node[above] {$\eta$};
  \draw (\xt,0.035) -- (\xt,-0.035)
    node[below=2pt] {$\sqrt\varphi$};
  \draw (1.012,\xt) -- (0.988,\xt)
    node[left=2pt] {$\sqrt\varphi$};

  \fill (\xt,\xt) circle (1.5pt);
  \node[anchor=south west,align=left] at (\xt+0.0,\xt-0.35)
    {triple-point};
  \node[blue!55!black] at (1.12,1.65) {Bures};
  \node[green!40!black,rotate=0] at (1.52,1.92) {Fisher--Rao};
  \node[orange!70!black] at (1.75,0.8) {BW};
\end{tikzpicture}%
}
\caption{Conditional one-mode branch topology in the \((x,\eta)\) plane.
The colored regions show the smallest normalized dual determinant density only
under Definition~\ref{def:minimum-determinant}.  The Bures--BW boundary is the
unique curve \(h(x)=\eta^3\) for \(1<x<\sqrt\varphi\); the Bures--Fisher and
Fisher--BW boundaries are \(x=\sqrt\varphi\) and \(\eta=x\), respectively.
Their codimension-two junction is
\((x,\eta)=(\sqrt\varphi,\sqrt\varphi)\).  The diagram is a conditional
determinant comparison, not a thermodynamic phase diagram.}
\label{fig:conditional-branches}
\end{figure}

The triple-volume point is not part of the asymptotic high-noise limit.  At
\(x=\sqrt{\varphi}\), the magnitude of the symplectic correction relative to
the information-block weight is
\begin{equation}
    \frac{1}{x^2}=\frac{1}{\varphi}\approx0.618.
    \label{eq:tv-correction}
\end{equation}
The correction is therefore still of order unity.  Moreover, the symplectic
eigenvalue is only a factor \(\sqrt{\varphi}\approx1.272\) above the Williamson
floor.  The triple-volume equality should consequently not be interpreted as an
asymptotic classical limit.

Despite these exact kinematic results, several questions remain open:

\begin{enumerate}
\item \emph{Physical realizability.}
It is not known whether a microscopic system selects
\(\eta=\sqrt{\varphi}\).  The boundary-matching convention instead gives
\(\eta=1\).  A Hamiltonian ground cost, thermalizing bath, or microscopic
fluctuation theory is required to determine whether the triple value can be
physically realized.

\item \emph{Branch-selection dynamics.}
Remark~\ref{rem:fluctuation-interpretation} gives an exact formal cotangent
interpretation of the minimum-log-determinant rule, but no common physical
channel ensemble is derived.  A coupled cotangent lift or specified
detailed-balance model would have to fix the three quadratic actions, their
relative weights and scales, and the transition mechanism between them.

\item \emph{Thermodynamic status.}
No associated thermodynamic free energy, latent heat, critical exponent,
susceptibility divergence, or coexistence law has been demonstrated.
Equation~\eqref{eq:triple-volume-equality} therefore does not establish a
thermodynamic triple point, critical endpoint, or tricritical point.

\item \emph{Operational signature.}
Equal determinant densities do not imply equal quantum Fisher information
in a specified direction, equal transport cost, equal dissipation rate, or
equal geodesic length.  An experimentally accessible observable that detects
passage through the junction remains to be identified.

\item \emph{Multimode survival.}
The golden-ratio value follows from the one-mode determinant equation, with
one radial and two shape directions, together with the adopted branch
normalization.  For unequal multimode symplectic spectra, the junction may
split into a higher-dimensional locus, become mode dependent, or disappear.

\item \emph{Robustness beyond quadratic order.}
Non-Gaussian fluctuation corrections, branch-dependent measures, or
higher-order terms in an effective action may shift the equality point.
The location \(\sqrt{\varphi}\) is exact within the present one-mode
quadratic comparison, but its universality beyond that model is unknown.

\item \emph{Geometric interpolation.}
The present framework compares three separate metrics.  It does not
construct a single effective metric that interpolates continuously among
the Bures, Fisher--Rao, and Bures--Wasserstein structures.  Whether such an
extended geometry exists remains an open mathematical problem.

\end{enumerate}

\subsection{Kinematic crossover and finite-rate radial cost}
\label{sec:kinematic-dynamical}

The determinant comparison above is kinematic and conditional.  Under the
minimum-log-determinant rule, or within the formal cotangent Gaussian ensemble
of Remark~\ref{rem:fluctuation-interpretation}, it compares aggregate
dual-volume weights assigned to three covariance geometries.  It does not
provide an equation of motion or a mechanism that transfers a system between
the corresponding branches.

A separate comparison can be made in the primal metrics.  Given the same
externally prescribed radial trajectory, the Bures and Bures--Wasserstein
metrics assign different kinetic costs to its execution.  The following result
uses the convention \(\hbar=1\).

\begin{proposition}[Finite-rate radial cost separation]
\label{prop:finite-rate-cost}
Let \(\Sigma(t)=\nu(t)\Id_2\) be a continuously differentiable radial
compression trajectory with
\begin{equation}
    \nu(t)\downarrow\frac12
\end{equation}
and suppose that, sufficiently close to the boundary,
\begin{equation}
    0<v_0\leq-\dot\nu(t)\leq v_1<\infty.
    \label{eq:finite-rate-assumption}
\end{equation}
Define the Bures and Bures--Wasserstein kinetic densities by
\begin{equation}
    K_{\rm B}=g^{\rm B}_{\nu\nu}\dot\nu^{,2},
    \qquad
    K_{\rm BW}=g_{\rm BW}(\dot\Sigma,\dot\Sigma).
\end{equation}
Then
\begin{equation}
    K_{\rm B}
    =\frac{\dot\nu^{,2}}
    {4\left(\nu^2-\frac14\right)}
    \longrightarrow\infty
    \qquad\left(\nu\downarrow\frac12\right),
    \label{eq:finite-rate-bures-density}
\end{equation}
whereas
\begin{equation}
    K_{\rm BW}=\frac{\dot\nu^{,2}}{2\nu}
    \label{eq:finite-rate-bw-density}
\end{equation}
remains bounded.  Moreover, over the final segment approaching the boundary,
the integrated Bures kinetic action diverges while the integrated
Bures--Wasserstein kinetic action remains finite (the divergent Bures and
finite Bures--Wasserstein radial costs, and the deceleration bound of
Corollary~\ref{cor:finite-budget-deceleration}, are checked numerically in
Ref.~\cite{VerifyScript}, check~C13):
\begin{equation}
    \int K_{\rm B}\,dt=\infty,
    \qquad
    \int K_{\rm BW}\,dt<\infty.
    \label{eq:finite-rate-actions}
\end{equation}
\end{proposition}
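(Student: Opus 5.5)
The plan is to compute the two kinetic densities directly from coefficients already established, and then control the integrals by changing variables from \(t\) to \(\nu\). The finite-rate hypothesis \eqref{eq:finite-rate-assumption} makes \(\nu\) strictly monotone on the final segment, so this change of variables is legitimate.

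\emph{Densities.} For the Bures density, the trajectory moves only in the radial coordinate at fixed \(r=0\). Since the QFI of Proposition~\ref{prop:single-mode-qfi} is diagonal, \(K_{\rm B}=g^{\rm B}_{\nu\nu}\dot\nu^{2}\). Substituting Eq.~\eqref{eq:radial-components} gives Eq.~\eqref{eq:finite-rate-bures-density}. Because \(\dot\nu^2\geq v_0^2>0\) while \(\nu^2-\tfrac14\to0^+\), \(K_{\rm B}\to\infty\).

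For the BW density, \(\dot\Sigma=\dot\nu\Id_2\) and \(\Sigma=\nu\Id_2\). Then \(\mathcal L_\Sigma(\dot\Sigma)=\dot\nu\Id_2/(2\nu)\), so
\begin{equation*}
    g_{\rm BW}(\dot\Sigma,\dot\Sigma)=\tfrac12\Tr\!\left[\frac{\dot\nu}{2\nu}\Id_2\,\dot\nu\Id_2\right]=\frac{\dot\nu^2}{2\nu}.
\end{equation*}
Near the floor \(\nu\geq\tfrac12\), so \(K_{\rm BW}\leq v_1^2\) and this density stays bounded.

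\emph{Integrated actions.} Choose \(t_0\) after which the rate bounds hold, and set \(\nu_0=\nu(t_0)\). Let \(T\) be the (possibly infinite) arrival time. Using \(dt=d\nu/|\dot\nu|\),
\begin{equation*}
    \int_{t_0}^{T}K_{\rm B}\,dt=\int_{1/2}^{\nu_0}\frac{|\dot\nu|}{4(\nu^2-\frac14)}\,d\nu\;\geq\;\frac{v_0}{4}\int_{1/2}^{\nu_0}\frac{d\nu}{(\nu-\frac12)(\nu+\frac12)}=\infty.
\end{equation*}
The divergence comes from the nonintegrable \(1/(\nu-\tfrac12)\) singularity. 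This differs from Proposition~\ref{prop:finite-distance}, where only the square root of the coefficient appears and the length is finite; the contrast with that result is the conceptual point worth remarking on.

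For BW, the same substitution gives
\begin{equation*}
    \int K_{\rm BW}\,dt=\int_{1/2}^{\nu_0}\frac{|\dot\nu|}{2\nu}\,d\nu\leq v_1\log(2\nu_0)<\infty.
\end{equation*}

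\emph{Main obstacle.} I do not expect a genuine difficulty, only bookkeeping. One must confirm that the boundary is reached in finite time, \(T\leq t_0+(\nu_0-\tfrac12)/v_0\), so that the endpoint is well defined. The change of variables holds on \([t_0,T)\) because \(\dot\nu\) is continuous and nonvanishing there. The lower rate bound \(v_0\) is exactly what forces the Bures divergence; if it failed, a sufficiently decelerating approach could keep the action finite, which anticipates Corollary~\ref{cor:finite-budget-deceleration}.
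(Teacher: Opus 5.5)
Your proposal is correct and follows the paper's proof step for step. You take the Bures density from Eq.~\eqref{eq:radial-components} and the BW density from the Lyapunov solution \(\mathcal L_\Sigma(\dot\Sigma)=\dot\nu\Id_2/(2\nu)\); you then change variables \(t\to\nu\), so that the lower rate bound exposes the nonintegrable \((\nu-\tfrac12)^{-1}\) singularity and the upper bound controls \(\int d\nu/\nu\). The only differences are cosmetic: your BW bound \(v_1\log(2\nu_0)\) is valid but a factor of two looser than the \(\tfrac{v_1}{2}\log(2\nu_0)\) the substitution actually gives, and your finite-arrival-time check is a harmless extra the paper leaves implicit.
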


\begin{proof}
The Bures expression follows from Eq.~\eqref{eq:radial-components}.  For the
Bures--Wasserstein metric, the Lyapunov equation
\begin{equation}
    \Sigma\mathcal L_\Sigma(\dot\Sigma)
    +\mathcal L_\Sigma(\dot\Sigma)\Sigma
    =\dot\Sigma
\end{equation}
gives
\begin{equation}
    \mathcal L_\Sigma(\dot\Sigma)
    =\frac{\dot\nu}{2\nu}\Id_2.
\end{equation}
Using the convention of Eq.~\eqref{eq:bw-dual-exact},
\begin{equation}
    g_{\rm BW}(\dot\Sigma,\dot\Sigma)
    =\frac12\Tr\!\left[
      \mathcal L_\Sigma(\dot\Sigma)\dot\Sigma
      \right]
    =\frac{\dot\nu^{,2}}{2\nu}.
\end{equation}
The bounds in Eq.~\eqref{eq:finite-rate-assumption} make the Bures density
diverge and keep the BW density bounded.  Since the path is strictly monotone,
changing variables from \(t\) to \(\nu\) gives
\begin{align}
    \int K_{\rm B}\,dt
    &\geq\frac{v_0}{4}
      \int_{1/2}^{\nu_1}
      \frac{d\nu}{\nu^2-\frac14}=\infty,\\
    \int K_{\rm BW}\,dt
    &\leq\frac{v_1}{2}
      \int_{1/2}^{\nu_1}\frac{d\nu}{\nu}<\infty,
\end{align}
where \(\nu_1>1/2\) lies in the final segment on which the rate bounds hold.
\end{proof}

\begin{corollary}[Finite-budget radial deceleration]
\label{cor:finite-budget-deceleration}
If the instantaneous Bures kinetic density is bounded by
\(K_{\rm B}\leq K_{\max}\), then
\begin{equation}
    |\dot\nu|
    \leq2\sqrt{K_{\max}\left(\nu^2-\frac14\right)}
    \longrightarrow0
    \qquad\left(\nu\downarrow\frac12\right).
    \label{eq:finite-budget-rate}
\end{equation}
Thus a bounded-budget Bures trajectory must decelerate in the covariance
coordinate on approach to the Williamson floor.
\end{corollary}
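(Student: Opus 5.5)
The bound follows directly from the explicit radial Bures coefficient, Eq.~\eqref{eq:radial-components}, in the same form used in Proposition~\ref{prop:finite-rate-cost}. I would first write the instantaneous Bures kinetic density along a radial trajectory \(\Sigma(t)=\nu(t)\Id_2\) (at fixed \((r,\varphi)\), or more generally at fixed Williamson frame) as
\begin{equation*}
    K_{\rm B}=g^{\rm B}_{\nu\nu}\dot\nu^{\,2}
    =\frac{\dot\nu^{\,2}}{4\left(\nu^2-\frac14\right)},
\end{equation*}
valid on the faithful interior \(\nu>1/2\), where the coefficient is finite and positive.

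Next I would impose the budget \(K_{\rm B}\leq K_{\max}\) and multiply through by the strictly positive factor \(4(\nu^2-\tfrac14)\). This gives \(\dot\nu^{\,2}\leq4K_{\max}(\nu^2-\tfrac14)\). Taking nonnegative square roots then yields Eq.~\eqref{eq:finite-budget-rate}. The limit statement follows from continuity of the right-hand side in \(\nu\): as \(\nu\downarrow1/2\), the factor \(\nu^2-\tfrac14\to0\), so the admissible speed bound tends to zero.

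The interpretive clause, that the trajectory must decelerate in the covariance coordinate, is precisely this pointwise bound. It is an upper bound on \(|\dot\nu|\) that vanishes at the floor. It does not say whether the floor is reached in finite time, and I would state explicitly that this question is left open. For comparison, the finite-distance result, Proposition~\ref{prop:finite-distance}, shows that a constant-Bures-speed path does reach the floor in finite time, namely \(\ell(\nu_0,\tfrac12)/\sqrt{K_{\max}}\). Deceleration in \(\nu\) is therefore compatible with finite arrival time, and this keeps the corollary consistent with the earlier remark that geodesic incompleteness is not being claimed.

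There is essentially no obstacle. The only care needed is to restrict to \(\nu>1/2\) so that the division is legitimate, and to note that at \(\nu=1/2\) itself the bound holds trivially by continuity if the trajectory is continuously differentiable.
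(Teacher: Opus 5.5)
Your argument is correct and is the same as the paper's, which treats the corollary as an immediate rearrangement of $K_{\rm B}=\dot\nu^{2}/[4(\nu^2-\tfrac14)]$ from Proposition~\ref{prop:finite-rate-cost}; your point that deceleration is compatible with finite arrival time also matches the paper's subsequent remark that, e.g., $\nu(t)=\tfrac12+(T-t)^2$ reaches the floor with bounded $K_{\rm B}$. One small correction to an aside that does not affect the proof: the earlier remark does not say geodesic incompleteness is disclaimed---it says Proposition~\ref{prop:finite-contact} is not itself a statement about geodesic completeness, and then notes that the floor lies at finite Bures distance, which if anything indicates that the faithful interior is incomplete.
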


\begin{remark}[No forced transport conclusion]
Proposition~\ref{prop:finite-rate-cost} compares the costs assigned by two
metrics to the same externally prescribed finite-rate trajectory.  It does not
derive a transition from Bures to Bures--Wasserstein geometry.  A decelerating
schedule can have finite Bures kinetic density and finite integrated action;
for example, \(\nu(t)=\tfrac12+(T-t)^2\) reaches the boundary with bounded
\(K_{\rm B}\).  Nor does the Bures--Wasserstein metric provide a quantum
continuation into \(\nu<1/2\), since such covariances lie outside the quantum
Gaussian state space.  Selecting a transport branch requires an independent
dynamical rule beyond the cost separation proved here.
\end{remark}

\section{Interpretation, limitations, and outlook}
\label{sec:two-routes}

\subsection{Boundary restriction versus asymptotic classicality}

The global picture is summarized by two regimes of unequal logical status
(Fig.~\ref{fig:signature}):
\begin{enumerate}
    \item \emph{Quantum boundary restriction.}  As
    \(\nu_{\min}\downarrow\hbar/2\), the faithful radial metric becomes
    singular and the dual radial mobility vanishes.  Tangential pure-Gaussian
    geometry remains finite.  This is not a classical limit.  A classical
    statistical description may arise only after restriction to an externally
    specified commutative measurement or conditioned submanifold.
    \item \emph{Intrinsic asymptotic suppression.}  As
    \(\nu_{\min}/\hbar\to\infty\), the symplectic correction becomes negligible
    relative to the Fisher block.  The geometry remains nondegenerate and
    approaches classical covariance statistics smoothly.
\end{enumerate}

Only the upper regime is therefore an intrinsic classical limit of the metric.
At the Williamson floor, the intrinsic boundary geometry is the
Fubini--Study geometry of the pure Gaussian orbit, and the cometric kernel is
canonically isomorphic to the vacuum-stabilizer algebra \(\mathfrak u(m)\).
The lower regime is more accurately described as a quantum boundary from which
classical statistics may be extracted conditionally; the covariance geometry
supplies neither the measurement map nor the conditioning rule.

\begin{table}[b]
\caption{The two regimes and their unequal logical status.  Only the
high-noise regime is an intrinsic classical limit of the Bures geometry.}
\label{tab:two-limits}
\begin{ruledtabular}
\begin{tabular}{lll}
Regime & Geometric mechanism & Established geometry\\
\hline
\(\nu_{\min}\downarrow\hbar/2\) & Radial degeneration and boundary restriction
& Pure Gaussian / conditionally restricted statistics\\
\(\nu_{\min}/\hbar\to\infty\) & Relative symplectic suppression
& Covariance Fisher component\\
\end{tabular}
\end{ruledtabular}
\end{table}

\begin{figure}[t]
\centering
\resizebox{\columnwidth}{!}{%
\begin{tikzpicture}[>=Stealth, line join=round, font=\footnotesize]
  \def\Rf{2}       
  \def\Rout{5.7}
  \shade[inner color=blue!16, outer color=white] (0,0) circle (\Rout);
  \fill[red!6] (0,0) circle (\Rf);
  \fill[pattern=north east lines, pattern color=red!28] (0,0) circle (\Rf);
  \draw[densely dashed, very thick, red!70!black] (0,0) circle (\Rf);
  \draw[->, thick] (0,0) -- (\Rout+0.6,0) node[right] {$\nu$};
  \fill (0,0) circle (1.5pt);   \node[above left=-2pt] at (0.4,-0.4) {$\nu=0$};
  \draw[thick] (\Rf,0.13) -- (\Rf,-0.13);  \node[below=2pt] at (\Rf,-0.13) {$\nu=\tfrac12$};
  \draw[thick] (4,0.10) -- (4,-0.10);       \node[below=2pt] at (4,-0.10) {$\nu=1$};
  \node[below right, align=left, gray!55!black] at (4.4,-0.5)
     {radial / dilation axis:\\ metric singular at floor,\\ dual mobility $\to 0$};
  \draw[->, thick] (26:\Rf+2.15) -- (26:\Rf+0.10);
  \node[anchor=west, align=left] at (26:\Rf+2.2)
     {backward dilation $D_s$\\ meets floor at finite $s_*$};
  \draw[<->, very thick, blue!60!black]
       (110:\Rf) arc[start angle=110, end angle=160, radius=\Rf];
  \node[blue!45!black, align=center, anchor=east] at (152:\Rf+1.9)
     {pure-Gaussian orbit\\ (squeeze, rotation):\\ tangent geometry regular};
  \node[red!55!black, align=center] at (0,.8) {quantum forbidden\\ $\nu<\tfrac12$};
  \node[align=center] at (-122:\Rf+1.55) {admissible mixed\\ Gaussian states\\ $\nu>\tfrac12$};
  \node[align=center, blue!40!black] at (54:\Rout-1.055)
     {high noise:\\ $O(\hbar^2/\nu^2)\!\to\!0$\\ (classical Fisher)};
  \node[align=center, red!45!black, anchor=north] at (0,-\Rout-0.05)
     {classical covariance cone continues to $\nu=0$ (no quantum floor)};
  \node[align=center, blue!45!black, anchor=south] at (0,\Rout+0.05)
     {no finite upper boundary --- admissible region open as $\nu\to\infty$};
\end{tikzpicture}%
}
\caption{Phase-domain schematic of the one-mode Gaussian Bures manifold
(radius \(\propto\nu\); \(\hbar=1\), floor \(\nu=\tfrac12\)).  The classical
covariance cone extends to \(\nu=0\); the quantum-admissible subdomain is
\(\nu>\tfrac12\).  The dashed Williamson edge is the pure-state boundary: the
radial (dilation) direction is singular there in the primal metric and null in
the dual [Eq.~\eqref{eq:radial-components}, Thm.~\ref{thm:null-mobility}], while
pure-Gaussian squeeze/rotation directions tangent to the edge remain regular
[Thm.~\ref{thm:anisotropic-boundary}].  Backward dilation \(D_s\) meets the edge
at finite parameter \(s_*\) [Prop.~\ref{prop:finite-contact}].  At large \(\nu\)
the symplectic correction is suppressed as \(O(\hbar^2/\nu^2)\) and the geometry
approaches its covariance Fisher component [Prop.~\ref{prop:upper-limit}]; there
is no finite upper boundary.}
\label{fig:signature}
\end{figure}

The terminology ``quantum band'' can therefore be used only qualitatively.  The
physical Bures manifold occupies \(\nu_k>\hbar/2\), has a singular lower edge,
and an asymptotically classical large-\(\nu\) region.  The triple-volume
junction of Section~\ref{sec:triple-volume} sits strictly between the two
regimes, at \(x=\sqrt{\varphi}\), where the symplectic correction is still of
order unity; it is a kinematic crossing, not a boundary.  Without additional
physics there is no second finite spectral wall.

\subsection{Measurement, commutative restriction, and boundary statistics}
\label{sec:measurement}

The boundary and high-noise regimes provide two inequivalent routes to
classical statistical geometry.  In the high-noise regime, suppression of the
symplectic correction is intrinsic: the Gaussian Bures geometry approaches its
covariance Fisher component without selecting a measurement.  At the
Williamson floor, by contrast, the surviving state-space geometry remains
quantum and is tangentially the geometry of the pure Gaussian orbit.  A
classical statistical model arises there only after a measurement channel,
commuting observable algebra, or conditioning rule maps the quantum family to
a family of outcome distributions.

This gives measurement a precise but limited role.  The covariance geometry
determines the admissible domain, the anisotropic boundary structure, and the
directions that remain regular or become null.  A specified measurement then
determines which of those directions are statistically accessible in its
classical outcome space.  The metric alone does not choose that measurement,
derive its stochastic record, or generate collapse; nevertheless, it supplies
the geometric structure on which any such operational restriction must act.

\subsection{Gaussian versus non-Gaussian continuation}
\label{sec:continuation}

The spectral floor and departure from Gaussianity are distinct events.  The
Gaussian manifold is characterized not only by its covariance but also by the
vanishing of all cumulants above second order.  A Gaussian and a non-Gaussian
state can share the same first and second moments; covariance geometry alone
cannot distinguish them.

The covariance nullity established by Theorem~\ref{thm:null-mobility} does not
imply that every direction in the full quantum-state space becomes singular.

\begin{remark}[Transverse non-Gaussian tangents]
\label{rem:non-gaussian-tangents}
Let \(|\psi_G\rangle\) be a pure Gaussian state and let
\(|\chi_{\rm NG}\rangle\) be a fixed normalized vector orthogonal both to
\(|\psi_G\rangle\) and to the tangent space of its pure-Gaussian unitary orbit.
The projective curve
\begin{equation}
    |\psi_\epsilon\rangle
    =\frac{|\psi_G\rangle+\epsilon|\chi_{\rm NG}\rangle}
    {\sqrt{1+\epsilon^2}}
\end{equation}
has a finite, nonzero Fubini--Study norm at \(\epsilon=0\).  Regular
non-Gaussian directions therefore exist in the full pure-state manifold even
though the radial Gaussian covariance direction becomes singular.

Whether an actual evolution develops a component along such a transverse
direction depends on its generator.  A specified Lindblad equation,
measurement-conditioning rule, or nonlinear Hamiltonian can be tested for this
behavior, but the covariance metric alone neither produces nor selects
non-Gaussian escape.
\end{remark}

Gaussian-preserving dynamics---for example quadratic Hamiltonians with linear
Gaussian noise or Gaussian conditioning---can approach the floor and remain
Gaussian.  A genuine non-Gaussian transition requires a nonquadratic generator,
non-Gaussian measurement, or loss of transverse stability in higher-cumulant
variables.

\subsection{Scope, limitations, and open questions}
\label{sec:discussion}

The construction of the Gaussian Bures metric and the study of its global
geometry are distinct problems.  The companion lift identifies the algebraic
origin of the symplectic correction, while Section~\ref{sec:manifold} derives
the dual form needed here directly from the Gaussian SLD equation.  That form
exposes the domain on which the correction yields a positive Riemannian
geometry, its loss of rank at the Williamson floor, and its relative
disappearance at high noise.

The most important boundary feature is anisotropy.  The divergence of
\(g^{\rm B}_{\nu\nu}\) is not a divergence of every direction and does not erase
the pure Gaussian manifold.  It is a singularity of the faithful-state radial
chart associated with changing rank.  In the dual picture, the corresponding
radial coefficient vanishes; under a specified gradient-flow interpretation,
this coefficient acts as a vanishing radial mobility.  Geometrically, the
floor admits no positive-definite continuation in the radial compression
direction, while unitary shape motion remains regular.

The primal comparison adds a deliberately conditional dynamical statement.
If an external protocol insists on a radial coordinate rate bounded away from
zero, the Bures kinetic density and integrated action diverge, while the
Bures--Wasserstein cost assigned to the same path remains finite.  This does
not force a change of geometry.  Under a bounded Bures kinetic budget, the
intrinsic consequence is instead \(\dot\nu\to0\); a suitable decelerating
parameterization can reach the finite-distance boundary with finite action.

The radial potential \(\Phi\) sharpens this result.  It diverges at a boundary
that is nevertheless at finite Bures distance, and its Bures gradient generates
uniform covariance dilation.  On fixed-Hamiltonian thermal families it equals
dimensionless Helmholtz free energy.  Outside those families it remains a
geometric entropy-scaling defect and should not be promoted to a thermodynamic
state function without additional structure.

The analysis also places a firm limit on measurement interpretations.  A metric
can identify stiff and soft directions; it cannot supply an apparatus,
conditioning rule, stochastic record, or transverse force.  Gaussian
measurement models can remain on the Gaussian boundary, while non-Gaussian
models can leave it.  A bona fide transition out of the Gaussian manifold must
be established through higher-cumulant dynamics or another explicit order
parameter, not inferred from the covariance floor alone.

At the opposite end, the symplectic correction is asymptotically suppressed.
The proved leading metric is the covariance Fisher component.  A concrete
dynamics intrinsic to the Bures--Wasserstein geometry is already available:
its gradient flow arises as the overdamped Rayleigh reduction of a Hamiltonian
covariance lift \cite{KerskensBWHamiltonian}.  What remains open in the present
comparison is not the existence of Bures--Wasserstein dynamics, but the
derivation of a mechanism that selects it over the quantum Bures or
Fisher--Rao branches.  The Hamiltonian lift by itself neither fixes the
relative action-matching parameter \(\eta\) nor implies the conditional
minimum-log-determinant rule.  Likewise, a finite upper endpoint requires an
independent scale and belongs to an enlarged thermodynamic theory rather than
to the bare Bures manifold.

Several mathematical questions remain.  The first is the extension of the
radial-potential identity beyond fixed Williamson frames.  The second is
whether Bures geodesics
between Gaussian states remain Gaussian; unlike Gaussian \(W_2\) interpolation,
this closure is not assumed here.  Finally, a concrete higher-cumulant stability
calculation is required to determine when a particular dynamics undergoes a
non-Gaussian bifurcation.

\subsection{Outlook: Higher cumulants and non-Gaussian dynamics}
\label{sec:higher-cumulants-outlook}

The boundary and determinant comparisons developed here are confined to the
manifold of centered Gaussian states, on which the covariance matrix provides
a complete state description.  Outside this manifold, higher connected
moments
\begin{equation}
    K=\bigl(\kappa^{(3)},\kappa^{(4)},\ldots\bigr)
\end{equation}
supply additional coordinates that are invisible to covariance geometry.

One may formally enlarge the state description from \(\Sigma\) to
\((\Sigma,K)\) and study perturbations transverse to the Gaussian sector
\(K=0\).  Such an extension is kinematic until a concrete evolution law is
specified.  In particular, neither the degeneracy of the radial covariance
cometric nor the conditional determinant-branch crossover generates a
non-Gaussian trajectory by itself.

A physical transverse evolution could arise from, for example, a Lindblad
generator with a nonquadratic Hamiltonian or nonlinear jump operators, a
non-Gaussian measurement and conditioning rule, or a specified interacting
many-body model.  Linearization about the Gaussian manifold would then have
the schematic form
\begin{equation}
    \frac{d}{dt}\delta K
    =\mathcal L_{\perp}(\Sigma,\eta)\,\delta K
    +J_{\perp}(\Sigma,\eta),
    \label{eq:transverse-linearization}
\end{equation}
where \(\mathcal L_{\perp}\) governs transverse stability and \(J_{\perp}\)
represents direct non-Gaussian forcing.  If \(J_{\perp}=0\), a local loss of
Gaussian stability would require
\begin{equation}
    \max\operatorname{Re}\operatorname{spec}
    \mathcal L_{\perp}(\Sigma_c,\eta)=0.
    \label{eq:transverse-instability}
\end{equation}
Neither \(\mathcal L_{\perp}\) nor \(J_{\perp}\) is derived in the present
work.

No result established here implies that the transverse-instability condition
coincides with the triple-volume point
\begin{equation}
    (x,\eta)=(\sqrt{\varphi},\sqrt{\varphi}).
\end{equation}
The latter is a kinematic junction of three covariance-level determinant
densities.  Whether a particular microscopic model places a non-Gaussian
stability threshold at, near, or independently of this junction is a
falsifiable open question.  Answering it requires deriving
\(\mathcal L_{\perp}\) from an explicit generator and comparing its stability
locus with the conditional Gaussian branch diagram.

\section{Conclusion}

The Gaussian Bures manifold is globally organized by the uncertainty floor and
the relative scale of its symplectic correction.  Its full-rank interior ends
at the Williamson boundary, where the radial primal metric diverges and the
dual form develops null directions.  This boundary remains at finite Bures
distance and retains a finite intrinsic pure-Gaussian geometry tangentially.
On fixed-Williamson-frame radial submanifolds, the entropy-scaling defect is an
exact Bures potential for covariance dilation and diverges at the same edge.
None of these facts alone forces a non-Gaussian transition.

In the opposite limit, \(\nu/\hbar\to\infty\), the symplectic correction is
suppressed relative to the Fisher block, producing the intrinsic asymptotic
classical limit of the geometry.  The Williamson floor has a different status:
it is a rank-changing quantum boundary with residual pure-Gaussian geometry,
not a second classical limit.  Classical statistics can be extracted there
only after a specified measurement channel, commuting observable algebra, or
conditioning rule maps the quantum family to classical outcome distributions.
The geometry constrains the regular and null directions available to that
operational restriction, while the measurement determines which directions
are statistically accessible.  It does not derive the channel or its dynamics.
This contrast between intrinsic asymptotic classicality and measurement-induced
boundary statistics is the principal global feature of the Gaussian Bures
geometry.

Separately, the one-mode determinant comparison has a codimension-two junction
at \((x,\eta)=(\sqrt{\varphi},\sqrt{\varphi})\).  Under the conditional
minimum-log-determinant rule this junction separates direct Bures--BW branch
switching from a sequential Bures--Fisher--Rao--BW diagram.  The equality is
kinematic.  A formal cotangent Gaussian integral interprets the rule, but no
common three-channel dynamics is derived; the junction is therefore neither a
tensorial Bures-to-BW limit nor a thermodynamic phase transition.

\appendix

\section{Derivation of the single-mode QFI}
\label{app:qfi}

Because Eq.~\eqref{eq:lyapunov} is covariant under fixed symplectic changes of
basis, the calculation can be made at \(\varphi=0\).  Let
\begin{equation}
    D=\begin{pmatrix}e^{2r}&0\\0&e^{-2r}\end{pmatrix},
    \qquad \Sigma=\nu D.
\end{equation}
For \(\partial_\nu\Sigma=D\),
\begin{equation}
    \mathfrak G_\nu=\frac{D^{-1}}{\nu^2-\frac14}
\end{equation}
solves Eq.~\eqref{eq:lyapunov}, and hence
\begin{equation}
    F_{\nu\nu}=\frac12\Tr(\mathfrak G_\nu D)
    =\frac{1}{\nu^2-\frac14}.
\end{equation}

For squeezing,
\begin{equation}
    \partial_r\Sigma
    =2\nu\begin{pmatrix}e^{2r}&0\\0&-e^{-2r}\end{pmatrix}.
\end{equation}
Substitution of a diagonal symmetric gain into Eq.~\eqref{eq:lyapunov} gives
\begin{equation}
    F_{rr}=\frac{4\nu^2}{\nu^2+\frac14}.
\end{equation}

For phase-space rotation,
\begin{equation}
    \partial_\varphi\Sigma
    =2\nu\sinh(2r)
    \begin{pmatrix}0&1\\1&0\end{pmatrix}
\end{equation}
up to the sign convention for \(R\).  The off-diagonal gain yields
\begin{equation}
    F_{\varphi\varphi}
    =\frac{4\nu^2\sinh^2(2r)}{\nu^2+\frac14}.
\end{equation}
The trace inner products between radial, diagonal-traceless, and off-diagonal
sectors vanish, so all cross terms are zero.

\section{Boundary and coordinate conventions}
\label{app:conventions}

The manuscript uses \(\Sigma_{\rm vac}=\Id_2/2\) and
\(g^{\rm B}=F/4\).  Other drafts in this programme use a symplectic eigenvalue
with vacuum floor \(1\).  Under \(\nu_{(1)}=2\nu_{(1/2)}\), coordinate metric
coefficients transform by pullback; individual numerical factors must not be
compared without this change of variables.

At \(\nu=1/2\), the state loses rank.  The singular limit of the full-rank SLD
QFI is therefore not identical to the metric intrinsic to the pure-state
stratum.  Every boundary statement in the main text specifies which of these is
being used \cite{Safranek2017boundary,Safranek2018}.

\section*{Code availability}
The symbolic and numerical verification script
\texttt{verify\_bures\_geometry.py} and its output log (53 deterministic
checks, SymPy/NumPy) are archived at Zenodo, DOI:~[https://doi.org/10.5281/zenodo.21497092].  The
checks cover the single-mode QFI, the covariance-cometric spectra, the
multimode class weights and \(m^2\) nullity with its \(u(m)\) identification,
the auxiliary positivity cone, the determinant identities and golden-ratio
equality with exact slopes, the conditional branch ordering, and the
finite-rate radial cost separation.
\section*{Acknowledgments}
The author acknowledges the use of AI assistants for structural brainstorming, language refinement, and \LaTeX{} editing during the drafting of this manuscript. The author bears full responsibility for the accuracy and originality of the scientific arguments and equations presented here.
\bibliographystyle{apsrev4-2}
\bibliography{biblio}

\end{document}